\newtheorem{thm}{Theorem}
\newtheorem{cor}{Corollary}
\theoremstyle{remark}
\newtheorem{defn}{Definition}
\newtheorem{exam}{Example}
\definecolor{Gray}{gray}{0.85}
\definecolor{mycyan}{cmyk}{.3,0,0,0}
\newcolumntype{a}{>{\columncolor{Gray}}c}%\centering\backslash
\newcolumntype{b}{>{\columncolor{white}}c}
\newcolumntype{d}{>{\columncolor{mycyan}}c}
\begin{document}
%
% paper title
% can use linebreaks \\ within to get better formatting as desired
% Do not put math or special symbols in the title.
\title{Minimizing Age-upon-Decisions in Bufferless System: Service Scheduling and Decision Interval}
% Authors, for the paper (add full first names)
\author{Shutong~Chen, Tianci~Zhang, Zhengchuan~Chen,~\IEEEmembership{Member,~IEEE,} Yunquan~Dong,~\IEEEmembership{Member,~IEEE,} Min~Wang,~\IEEEmembership{Member,~IEEE,} Yunjian~Jia,~\IEEEmembership{Member,~IEEE,} and Tony~Q.~S.~Quek,~\IEEEmembership{Fellow,~IEEE}
\thanks{S. Chen, T. Zhang, Z. Chen, and Y. Jia are with the School of Microelectronics and Communication Engineering, Chongqing University, Chongqing 400044, China (Emails: \{cst, ztc, czc, yunjian\}@cqu.edu.cn).}
\thanks{Y. Dong is with School of Electronic and Information Engineering, Nanjing University of Information Science and Technology, Nanjing, China (Email: yunquandong@nuist.edu.cn).}
\thanks{M. Wang is with the School of Optoelectronics Engineering, Chongqing University of Posts and Telecommunications, Chongqing, China. (E-mail:~wangm@cqupt.edu.cn).}
\thanks{Tony Q. S. Quek is with the Pillar of Information Systems Technology and Design, Singapore University of Technology and Design, Singapore. (Email:
tonyquek@sutd.edu.sg).}

%\author{Zhengchuan~Chen,~Guido~C.~Ferrante,~Howard~H.~Yang, and~Tony~Q.~S.~Quek
%\thanks{}
%This work was supported in part by the SUTD-ZJU Research Collaboration under Grant SUTD-ZJU/RES/01/2014 and the MOE ARF Tier 2 under Grant MOE2015-T2-2-104.}
%\thanks{}
}
% The paper headers
%\markboth{Journal of \LaTeX\ Class Files,~Vol.~11, No.~4, December~2012}%
%{Shell \MakeLowercase{\textit{et al.}}: Bare Demo of IEEEtran.cls for Journals}
% The only time the second header will appear is for the odd numbered pages
% after the title page when using the twoside option.
%
% *** Note that you probably will NOT want to include the author's ***
% *** name in the headers of peer review papers.                   ***
% You can use \ifCLASSOPTIONpeerreview for conditional compilation here if
% you desire.
% make the title area
%\maketitle
\maketitle

% As a general rule, do not put math, special symbols or citations
% in the abstract or keywords.
\begin{abstract}
The data freshness at decision epochs of time-sensitive applications, e.g., auto-driving vehicles and autonomous underwater robots, is jointly affected by the statistics of update process and decision process. This work considers an update-and-decision system with a Poisson-arrival bufferless queue, where updates are delivered and processed for making decisions with exponential or periodic intervals. We use age-upon-decisions (AuD) to characterize timeliness of updates at decision moments, and the missing probability to specify whether updates are useful for decision-making. Our theoretical analyses
1) present the average AuDs and the missing probabilities for bufferless systems with exponential or deterministic decision intervals under different service time distributions;
2) show that for service scheduling, the deterministic service time achieves a lower average AuD and a smaller missing probability than the uniformly distributed and the negative exponentially distributed service time;
3) prove that the average AuD of periodical decision system is larger than and will eventually drop to that of Poisson decision system along with the increase of decision rate; however, the missing probability in periodical decision system is smaller than that of Poisson decision system.
The numerical results and simulations verify the correctness of our analyses, and demonstrate that the bufferless systems outperform the systems applying infinite buffer size.
\end{abstract}

\begin{IEEEkeywords}
Age of information, bufferless system, service time, decision process.
\end{IEEEkeywords}

\section{Introduction}
\setlength{\textfloatsep}{0.9\baselineskip plus 0.2\baselineskip minus 0.2\baselineskip}
Over the past few years, the Internet of Things (IoT) has grown by leaps and bounds, and created a world where smart devices are able to connect to the Internet and communicate with each other. This, hence, fosters an ever-increasing number of real-time monitoring networks in many domains, e.g., traffic monitoring\cite{8571245}\cite{9210202}, underwater acoustic sensor networks\cite{9513315}\cite{9650694}, autonomous navigation\cite{9714880}\cite{9354008}, and unmanned aerial vehicle (UAV) communications\cite{8579209}\cite{9714786}. Different from traditional networks which are mostly designed with a concentration on transmitting messages accurately and efficiently, these IoT based networks are also supposed to react and make decisions based on received data, which means the timeliness of data is vital since timely data strongly supports the network implementation while outdated data causes erroneous decisions or overreactions. In this regard, how to ensure the service facility to receive fresh information is essential for IoT based systems.

In 2011, a new measurement known as age of information (AoI) was introduced to perfectly convey information freshness, which has been defined as the time elapsed since the birth of the newest delivered packet. Thus, AoI performs better for evaluating freshness of the information stream than traditional metrics, such as packet delivery delay and round-trip time (RTT), which only focus on single update. For example, small delay does not mean information is timely since it might suffer from long interval between consecutive deliveries.

However, in some scenarios, the interest is mostly in the information freshness at particular moments instead of time averages.  By noticing this, we proposed a new information freshness measurement termed as age upon decisions (AuD) to characterize information freshness at decision moments when decisions are made according to received update \cite{8720507}. Meanwhile, the IoT based systems that utilize updates to make random decisions are referred to as update-and-decision systems. For example, an IoT based system where the inter-arrival times are deterministic, service time is general, and the decision intervals are exponential, is modeled as an update-and-decision D/G/1-M system.\footnote{We follow and extend the Kendall notation system. In particular, the suffixes '-M' and '-D' denote exponential inter-decision times and deterministic inter-decision times, respectively.} It is clear that AuD characterizes the information freshness at decision moments more effectively than delivery delay and RTT.

\subsection{Motivations}
Our previous work mainly focused on update-and-decision system with infinite queue and studied how the average AuD behaves when considering different update processes, service time distributions or decision intervals. However, the queue length, or equivalently, the buffer size is sometimes limited to reduce waiting time of received updates in many practical IoT networks, hence we are interested in the behaviors of the AuD in bufferless systems.

\begin{exam}
In the self-driving vehicle model, in-vehicle sensors scan the internal and external environment, and generate packets which contain driving information such as speed, coordinate and road conditions. In conventional wireless communication system, these updates are queued to wait for transmission. However, due to the energy constraints, the buffer size of the in-vehicle sensor is normally limited,  which means the sensor nodes have to drop some packets if necessary to avoid the excessive use of node energy. By processing transmitted data, the monitor would first assess situations, and then might or might not make driving decisions. Specifically, the lane detecting sensors collect information about whether the vehicle deviates from the lane and send packets to the vehicle radio module, in which the packets are processed to see if there is the need to adjust steering of the vehicle. Also, the monitor would receive data reported by the distance sensor and the acceleration sensor, and control the vehicle to stop or decelerate to avoid obstacles. Apparently, the packet are generated randomly and contain unpredictable information, the transmission time which affected by variable packet length and channel condition is random and the driving decision-making process that depends on transmitted data is also random. Since the decision process and service process are asynchronous, AuD would be appropriate to characterize the decision timeliness while the AoI can only evaluate the update timeliness.
\end{exam}

\begin{exam}
Limited by the low available system resources and the small-scale energy storage, many low-cost IoT based systems which could make decisions randomly would prefer to fix the decision interval to minimize the computing resources and energy spent in the decision-making-unit (DMU). This can be seen as the trade-off between timeliness and efficiency. That is, to schedule a deterministic decision interval for energy-saving, or to randomize the decision interval for timeliness. One example of such system is the edge computing devices (e.g., smart gateways in Internet of Vehicles) whose computational capacities and energy supply are not as satisfactory as the centralized server, thus the DMU are expected to consume as less computational resources and energy as possible. In this situation, some DMUs would flexibly choose to scrap data regularly to make periodic decisions rather than run continuously as a running process in the background. 
\end{exam}

Therefore, the main focus of this paper is the decision timeliness of update-and-decision system with a length-1 queue. Specifically, we consider the blocking queue, which blocks the incoming updates when busy and only provides service when idle. We shall investigate how the IoT based bufferless system is influenced by different decision intervals. Also, we notice the fact that the system performance can be further optimized at the expense of high-cost network deployment. For example, the stability and quality of communications in autonomous vehicles can be significantly improved by changing from wireless connection to wired connection or increasing transmit power, which is possible to be applied in inter-communication between core components. In the AuD framework, the network deployment corresponds to the schedule of the service time distribution since the service time is defined as the transmission time. Hence we are also interested in whether service scheduling makes any difference to the system performance.

%短距离的无线通信（服务时间几乎是一样的） 或 周期性的信息交换 或 固定长度数据包 或专线？
%\begin{exam}
%In the smart healthcare scenario, wearable device monitors health conditions of a patient and gather health information such as the heart rate, body temperature and pulse. When one of these metrics shows abnormalities, the smart healthcare system would alert caregivers and close family. For example, patients who suffer from high blood pressure would not trigger the alarm if the blood pressure is normal since the system would not react to normal data. However, when emergency such as a fall happens,In this case, the transmission distance between sensors and the monitor is ultra short, hence it is viable to schedule the transmission time to optimize the decision timeliness
%\end{exam}

\subsection{Contributions}
Compared with previous work on the AuD which optimizes the generation and service of updates as well as decision-making process, this work also investigates the effect of buffer existence on the decision timeliness. However, it is not easy to characterize the decision timeliness and utilization of updates for the updating system since they are both jointly affected by buffer existence, generation process and service time of updates as well as decision intervals. To handle this, we study the average AuDs of update-and-decision M/G/1/1-G system by employing a length-1 blocking queue, where arrivals are Poisson distributed and decision interval is exponentially distributed or periodic. Specifically, since the average AuD of M/G/1/1-D bufferless systems cannot be calculated directly, we derive a reasonable approximate closed-form expression whose tightness is examined by simulations. We also propose the metric missing probability to evaluate the probability of updates missed for decisions. Based on these results, we compare and optimize the system performance under different forms of service scheduling and decision interval.
\begin{enumerate}
\item \emph{AuD of Bufferless Update-and-decision Queue:}
We consider the bufferless M/G/1/1-M systems utilizing Poisson decisions and M/G/1/1-D systems where decisions are made periodically. With analyses on the length-1 blocking queue, the general expressions of the average AuD and missing probability are respectively presented in closed-form. 
By applying obtained general expression to the case with a more typical service time, the average AuDs and missing probabilities under three typical service distributions (i.e., uniform/negative exponential/deterministic case) are also obtained, which are meaningful for a wide variety of practical scenarios. It is numerically shown that the considered bufferless queue results in a much lower average AuD than the FCFS infinite queue under heavy system load, and also has better performance in terms of missing probability under all decision rates.

\item \emph{Optimal Service Scheduling:}
For both updating systems with respective exponential or deterministic decision intervals, we theoretically find the optimal service time distribution. In the M/G/1/1-M bufferless system, we show that the deterministic service time performs best in reducing the average AuD and missing probability while a negative exponentially distributed service time performs worst. As for the M/G/1/1-D bufferless system, deterministic service time is also the most suitable for service statistics while the performance ranking of uniformly distributed and negative exponentially distributed service time depends on decision rate.

\item \emph{Optimal Decision Interval:}
We theoretically prove that the average AuD of M/G/1/1-M bufferless system is independent of decision rate while that of corresponding M/G/1/1-D bufferless system decreases for increasing decision rate. Additionally, the missing probability of bufferless system also drops as decision rate climbs. For decision intervals, it is also shown both in theoretical and simulation results that the Poisson decision systems result in a lower average AuD but a higher missing probability than the periodical decision systems with the same decision rate, where we prove that the average AuD of the latter will eventually drop to the same level the former has.
\end{enumerate}

\subsection{Related Work}
The AoI has been exhaustively investigated under different queuing models since its first introduction. For elementary single stream queue, the authors in \cite{6195689} derived the average AoIs of first-come-first-served (FCFS) M/M/1, M/D/1 and D/M/1 systems. Along this line, a more general FCFS G/G/1 queuing model was considered in \cite{8820073}, in which the stationary distributions of the AoI and the peak AoI were studied. Remarkably, the information freshness in FCFS D/G/1 queuing system was characterized by the probability that the AoI or the peak AoI exceeded a certain limit %since it was not viable to find closed-form expressions for them
\cite{8406909} \cite{8691802}. Apart from FCFS discipline, many other service disciplines have been analyzed to keep information as fresh as possible \cite{6310931,7541764,8695040}. For example, a new discipline namely last-come-first-served (LCFS) was proposed in \cite{6310931}, where the average AoIs of LCFS M/M/1 systems with and without preemption were shown to be lower than that in \cite{6195689}. The authors of \cite{7541764} also paid attention to LCFS queue with a focus on Gamma distributed service time. Moreover, the idea of LCFS discipline was further optimized in \cite{8695040} where the last-generated, first-served discipline was introduced to optimize the average AoI.

In addition to service disciplines, different forms of packet management are also effective to improve the information freshness because the queue length of service facility is limited in some networks, especially in IoT based networks \cite{7415972,8945230,8323423,8006504,9478783,9014267}. For instance, the M/M/1/2* system where the newest update would replace the current update in waiting queue achieves better timeliness than the M/M/1/1 and M/M/1/2 blocking systems\cite{7415972}. Following this research, a server waiting scheme was employed to M/G/1/1 and M/G/1/2* systems and successfully improved the average AoI at the expense of the peak AoI \cite{8945230}. Another efficient way to improve the timeliness of M/G/1/1 and M/G/1/2 queuing models is to introduce a packet deadline, which is set to be deterministic or negatively exponential random\cite{8323423}. On the other hand, the authors in \cite{8006504} investigated the average AoIs of M/G/1/1 queuing systems under the consideration of preemption and blocking, while this result was extended to both G/G/1/1 blocking as well as preemptive queuing models in work \cite{9478783}, which also provided an upper bound for the average AoI in a system employing preemptive queue. Particularly, a retransmission scheme was considered in \cite{9014267} to satisfy the need to improve system timeliness under both small generation rate and packet block-length.

More generally, AoI is also suitable for characterizing timeliness of multi-stream systems \cite{8469047,9099557,8406928,9718306}. The average AoIs of M/M/1 multi-stream system under FCFS and LCFS discipline were computed in \cite{8469047}. More than that, the analyses in \cite{9099557} also presented approximate expressions to specify the average AoI of M/M/1 multi-stream system. Combined with packet management, it has been shown that a particular stream in the multi-stream M/G/1/1 preemptive system could be prioritized by increasing its generation rate \cite{8406928}. Meanwhile, a multi-flow M/G/1/1 system with blocking can foster more timely updates than the corresponding M/G/1 system \cite{9718306}. The authors of \cite{8469047} also proposed stochastic hybrid systems (SHS) to calculate the average AoI, which makes it possible to analyze more complex multi-flow queuing models where the expectations of variables are difficult to calculate \cite{9047958,9312180,9252168}. For example, round-robin service scheduling with retransmission has a lower average AoI than stationary randomized service policy \cite{9047958}. In \cite{9312180}, the authors presented the moment generating function (MGF) of the AoI for a general two-stream model and computed the first moment and the second moment of the AoI. Another work focused on the two-stream system showed that the self-preemptive queue achieves the lowest average AoI while the non-preemptive queue results in two streams with the same priority \cite{9252168}.

For different applications and scenarios, the AoI has propagated many variants for measuring performance of updates \cite{9155332,9146773,9137714}. One example is a new metric termed as channel-aware AoI, which characterizes the missing opportunities for delivering an update since the departure of latest successfully delivered update \cite{9155332}. Also, the authors in \cite{9146773} introduced urgency of information (UoI) aiming at specifying the overall freshness of information stream in remote control systems, where the more urgent information has a larger UoI. Meanwhile, the age of incorrect information was proposed and defined as the time elapsed since the last update that brings correct information in \cite{9137714}.
More relevantly, in \cite{8720507}, AuD was first proposed to characterize the information freshness at particular moments, i.e., decision moments. To gain more insight, the average AuDs of G/G/1-M as well as G/M/1-M update-and-decision systems were investigated in \cite{8796659}. Furthermore, the average AuDs were optimized in \cite{8887253} by finding the optimal update generation process as well as decision-making process, where an efficient algorithm was also presented to find the optimal packet generation process. Recently, the authors of \cite{9184001} studied the update-and-decision M/G/1-G system and showed that the average AuD can be minimized by scheduling service time and decision intervals.
Our work in this paper, however, aims to study the update-and-decision system with a bufferless queue and optimize the system performance in two aspects, i.e., the service scheduling and the decision interval.

\subsection{Organization}
The rest of this paper is organized as follows. Section \ref{sec:model} introduces the system model. Section \ref{sec:M} computes the average AuD and missing probability of M/G/1/1-M bufferless system. In Section \ref{sec:D}, we derive the general expression of average AuD in M/G/1/1-D bufferless system and calculate the missing probability. In Section \ref{sec:simula}, numerical results and Monte Carlo simulations are presented as the verification of our theoretical analyses. Section \ref{sec:conclu} concludes our main work.

\section{System Model}\label{sec:model}
Consider an update-and-decision system based on IoT which contains an information generator (e.g., a camera), a service facility (e.g., a wireless channel), and a decision-making-unit (e.g., a central processor). Status updates are generated based on a Poisson generation process with parameter $\lambda$ and subsequently sent to the service facility, in which updates are processed with service rate $\mu$ following a general independent distribution. The system offered load is denoted by $\rho ={\lambda}/{\mu}$. By analyzing the received updates, decisions are made by the DMU at decision rate $\nu$, which are used for controlling or managing. Noticing that the queue length might be limited for some queuing systems, especially for IoT based real-time systems\cite{7166237}\cite{9615359}, we therefore consider a length-1 queue with blocking discipline in this paper, where the waiting queue is not under consideration and only one update can be handled at a time. This means that the updates will only get served if they arrive when the system is idle, otherwise they will be discarded directly. In this way, the system can be abstracted as an update-and-decision M/G/1/1-G bufferless system.%, as shown in Fig. \ref{fig:model1}.

Given that some updates might be discarded directly in the bufferless system, we therefore refer to updates that arrive while the server is idle as successful updates. We use subscript $k$ to indicate the index of the $k^{th}$ successful update. As shown in Fig. \ref{fig:model2}, the $k^{th}$ successful update arrives at time $t_k$ and completes its service at time $t_k^{\\'}$. We also denote: (\romannumeral1) $X_k = t_{k+1} - t_{k}$ as the inter-arrival time between the $k^{th}$ successful update and the next arriving update (which might be dropped directly or delivered successfully), (\romannumeral2) $Y_{k} = t_k^{\\'} - t_{k-1}^{\\'}$ as the departure interval between the $k^{th}$ and ${(k-1)}^{th}$ successful updates, {(\romannumeral3) $T_k = t_k^{\\'}-t_k$ as the period that the $k^{th}$ successful update staying in the system, which is equivalent to its service time $S_k$ in blocking model, (\romannumeral4) $N_k$ as the number of decisions being made based on the ${(k-1)}^{th}$ successful update during $Y_k$, in which $N_k = 0, 1, ...$, and $\tau_{k_j}$ as the decision epoch, in which $j = 1, 2, ..., N_k$, and (\romannumeral5) $Z_j = \tau_j - \tau_{j-1}$ as the decision interval between two neighboring decision epochs.
%\begin{figure*}[!t]
%\centering
%\includegraphics[width=0.45\textwidth]{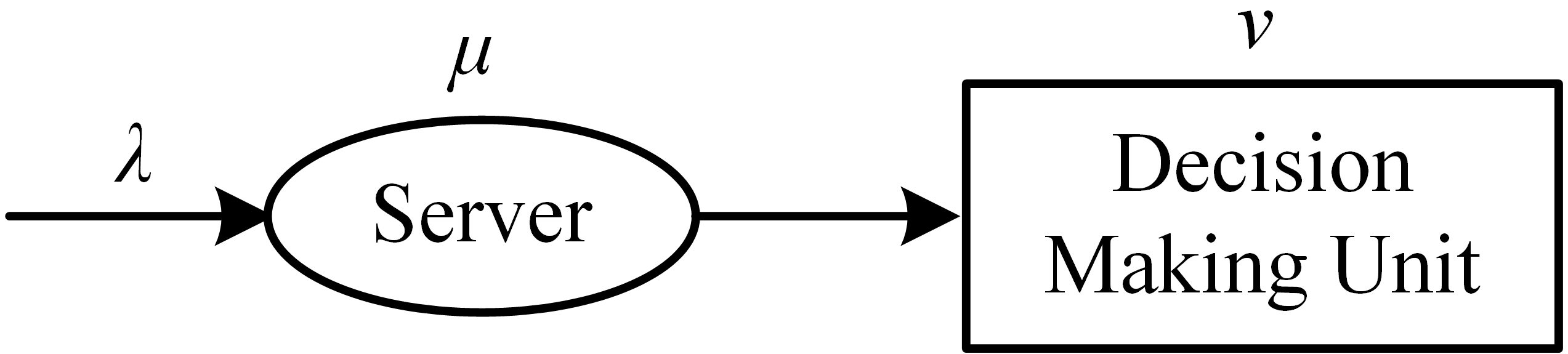}
%\caption{Update-and-decision system employing a blocking queue.}
%\label{fig:model1}
%\end{figure*}

\begin{figure}[!t]
%\centering
\setlength{\abovecaptionskip}{-0.7cm}
\includegraphics[width=1\linewidth]{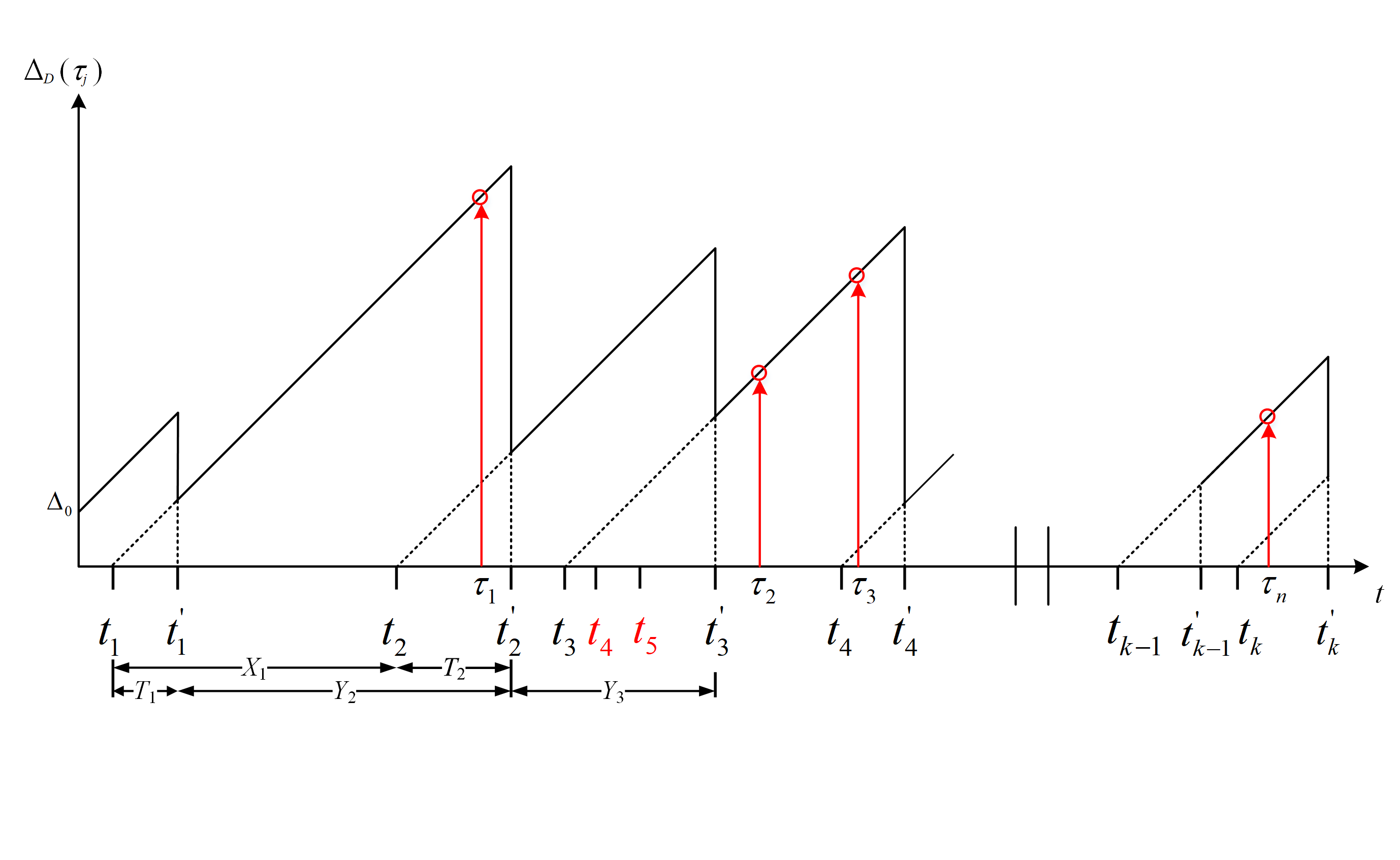}
\caption{Example of age upon decisions for a bufferless system, in which the red abscissas represent the updates dropped in the queue. }
\label{fig:model2}
\end{figure}

%X��S���� %S�ķֲ�

%\label{sec:model}
%\begin{figure*}[!t]
%\centering
%\includegraphics[width=0.85\textwidth]{Protocol.eps}
%\caption{The massive random access network and the frame structure of the AFSD protocol. $X_i$ represents the packet of UE $i$.}
%\label{fig:model}
%\end{figure*}

\begin{defn}(\emph{Age upon Decisions-AuD}\cite{8720507}, \cite{8796659}).
Denote the index of the most recently received update at the $j^{th}$ decision epoch by $N\left ({\tau _{j}}\right)=\text {max}\left \{{k|t_{k}'\leq \tau _{j}}\right \}$,
%\begin{align}
%N_{\rm U}\left ({\tau _{j}}\right)=\text {max}\left \{{k|t_{k}'\leq \tau _{j}}\right \},
%\end{align}
and the generation time of the update by $U\left ({\tau _{j}}\right)=t_{N\left ({\tau _{j}}\right)}$.
%\begin{align}
%U\left ({\tau _{j}}\right)=t_{N_{\rm U}\left ({\tau _{j}}\right)}.
%\end{align}
Then, the \emph{Age upon Decisions} of the updating system is defined as a random process
\begin{align}
\Delta _{\rm D}\left ({\tau _{j}}\right)=\tau _{j}-U\left ({\tau _{j}}\right).
\end{align}
\end{defn}

Apparently, a lower AuD means a more timely decision. By taking the new dimension (i.e., decision process) into consideration, AuD $\Delta _{\rm D}({\tau _{j}})$ assesses the information freshness at epochs when decisions are made while AoI quantifies that at every epoch. It is worth noting that AuD $\Delta _{\rm D}({\tau _{j}})$ would drop to AoI if the decision epoch $\tau _{j}$ is replaced by arbitrary time $t$.

With the above assumptions and definitions, we will focus on the average AuD of the system when given different service time distributions and decision intervals. Suppose $N_T$ decisions are made during the period $\emph T$, with $\mathop {\lim }\limits_{{j} \to \infty } {\tau _{j}} =  + \infty $, the average AuD is expressed as
\begin{align} \overline{\Delta}_{\rm D}=\lim _{T\rightarrow \infty }\frac {1}{ N_{T}} \sum _{j=1}^{N_{{T}}} \Delta _{\rm D}\left ({\tau _{j}}\right).
\end{align}

\section{Average AuD with Blocking Queue and Random Decisions} \label{sec:M}
In this part, we study the update-and-decision M/G/1/1-M bufferless system with a general service time, where both the arrival and decision intervals follow an exponential distribution. At first, we derive the average AuD when given a general service time distribution and three typical service time distributions (i.e., the uniform/negative exponential/deterministic case) separately. After that, we calculate the utilization rate of received updates to evaluate the system efficiency, which is determined by whether successful updates are useful for decision-making. Further, according to the obtained results, we shall theoretically investigate how service time distribution affects the updating system when the DMU makes decisions randomly.

\subsection{Average AuD}
Firstly, for the M/G/1/1-M bufferless system, \emph{Theorem} \ref{thm:1} presents an exact expression for the average AuD.

\begin{thm}\label{thm:1}
For the M/G/1/1-M bufferless system with random decisions, the average AuD is
\begin{align} \label{equ:MG11M}
\overline{\Delta}_{\text {M/G/1/1-M}}^{\rm{blocking}}=\frac {\lambda \mu \mathbb {E}\left [{S^{2}}\right]}{2\left (\lambda+\mu \right )} + \frac{\lambda+\mu}{\lambda \mu}.
\end{align}
\end{thm}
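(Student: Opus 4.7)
The plan is to exploit the PASTA property of the Poisson decision stream. Because decisions arrive according to an independent Poisson process with rate $\nu$, they sample the age trajectory in a way whose empirical distribution coincides with the long-run fraction of time the age spends in each state. Hence the average AuD equals the long-run time average of the age; in particular, the resulting value must not depend on $\nu$, which is consistent with the absence of $\nu$ in the claimed formula.

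First I would set up the sawtooth age process using the successful-update epochs. Between two consecutive successful-update completion times $t'_{k-1}$ and $t'_k$, the age at time $t'_{k-1}+u$ (for $0\le u\le Y_k$) equals $S_{k-1}+u$, since at $t'_{k-1}$ the age drops to the newly delivered update's system time $S_{k-1}$ and then grows at unit rate. The area under the age curve over one cycle is therefore
\begin{align*}
A_k \;=\; \int_{0}^{Y_k}\bigl(S_{k-1}+u\bigr)\,du \;=\; S_{k-1}Y_k + \tfrac{1}{2}Y_k^{2}.
\end{align*}
The renewal-reward theorem then identifies the time-average age as $\mathbb{E}[A_k]/\mathbb{E}[Y_k]$.

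Next I would decompose $Y_k=W_k+S_k$, where $W_k$ is the idle gap from $t'_{k-1}$ to the next successful arrival. Because the blocking discipline forces the server to be empty at $t'_{k-1}$ and arrivals are Poisson, the memoryless property gives $W_k\sim\mathrm{Exp}(\lambda)$ and makes $W_k$ independent of the service times $S_{k-1},S_k$. The three moments needed thus reduce to routine calculations:
\begin{align*}
\mathbb{E}[Y_k] &= \tfrac{1}{\lambda}+\tfrac{1}{\mu}, \\
\mathbb{E}[Y_k^{2}] &= \mathbb{E}[S^{2}] + \tfrac{2}{\lambda\mu} + \tfrac{2}{\lambda^{2}}, \\
\mathbb{E}[S_{k-1}Y_k] &= \tfrac{1}{\lambda\mu}+\tfrac{1}{\mu^{2}}.
\end{align*}
Substituting into $\mathbb{E}[A_k]/\mathbb{E}[Y_k]$ and collapsing the $S$-free terms via the identity $1/\mu^{2}+2/(\lambda\mu)+1/\lambda^{2}=(1/\lambda+1/\mu)^{2}$ produces exactly \eqref{equ:MG11M}.

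The step I expect to be the main obstacle is the rigorous justification of the PASTA reduction, namely that the decision process enjoys a lack-of-anticipation property relative to the update-server system, so that its sampled ages inherit the time-stationary distribution. Once that is in hand, everything else is renewal-reward bookkeeping, and the only nontrivial probabilistic input is the independence of $W_k$ from its neighboring service times, which is automatic from the memoryless arrivals combined with the blocking discipline.
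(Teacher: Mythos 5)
Your proof is correct and follows essentially the same route as the paper: the quantity $\mathbb{E}\bigl[S_{k-1}Y_k+\tfrac12 Y_k^2\bigr]/\mathbb{E}[Y_k]$ you obtain from PASTA plus renewal--reward is exactly the starting formula $\bigl(\mathbb{E}[Y_k^2]+2\mathbb{E}[T_{k-1}Y_k]\bigr)/\bigl(2\mathbb{E}[Y_k]\bigr)$ that the paper imports by citation (using $T_{k-1}=S_{k-1}$ in the blocking model), and your decomposition $Y_k=W_k+S_k$ with $W_k\sim\mathrm{Exp}(\lambda)$ independent of the neighboring service times is the paper's $Y=Z+S$ with $Z$ the residual (memoryless) inter-arrival time. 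The remaining moment computations coincide, so the only difference is that you re-derive the cited sawtooth formula from first principles rather than quoting it.
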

\begin{proof}
See Appendix \ref{app:thm1}.
\end{proof}

For the M/G/1/1-M bufferless system, one can therefore deduce that the average AuD is jointly determined by generation process and service time while is independent of rate of decisions $\nu$ based on \emph{Theorem} \ref{thm:1}.

Now we are ready to calculate the average AuDs of M/G/1/1-M bufferless systems under different service time distributions. In particular, we consider three typical service time distributions, i.e., the uniform distribution, the negative exponential distribution, and the distribution corresponding to deterministic service time. The corresponding update-and-decision systems can be abstracted as the M/U/1/1-M bufferless system, the M/M/1/1-M bufferless system, and the M/D/1/1-M bufferless system. Given that the mean service time $\mathbb E[S] = 1/\mu$, we also denote the corresponding probability density functions (PDFs) as
\begin{align}
f_{\text {\emph{S}U}}\left ({t}\right) &=\frac {\mu }{2},~~~~\quad\quad\quad\quad  t\in \left ({0, \frac{2}{\mu} }\right ), \label{equ:U}\\
f_{\text {\emph{S}E}}\left ({t}\right ) &= \mu e^{-\mu t},~~~\quad\quad\quad  t\geq 0, \label{equ:M}\\
f_{\text {\emph{S}D}}\left ({t}\right) &=\delta \left ({t-\frac {1}{\mu }}\right), ~~\quad t\geq 0 \label{equ:D},
\end{align}
where $\delta(t)$ is the Dirac delta function. For the above bufferless systems, we are able to compute their average AuDs in the following corollary by combining the above assumptions on service time distribution and the result in \emph{Theorem} \ref{thm:1}.

\begin{cor}\label{cor:1}
For the M/U/1/1-M, M/M/1/1-M and M/D/1/1-M bufferless systems with random decisions, the average AuDs are given by
\begin{align}
\overline{\Delta }_{\text {M/U/1/1-M}}^{\rm{blocking}} &= \frac {3+6\rho+5\rho^2}{3\lambda \left ({1+\rho} \right)} \label{equ:MU11M},
\\ \overline{\Delta }_{\text {M/M/1/1-M}}^{\rm{blocking}} &= \frac {1+2\rho+2\rho^2}{\lambda \left ({1+\rho} \right)}\label{equ:MM11M},
\\ \overline{\Delta }_{\text {M/D/1/1-M}}^{\rm{blocking}} &= \frac {2+4\rho+3\rho^2}{2\lambda \left ({1+\rho} \right)}\label{equ:MD11M},
\end{align}
and they have 
\begin{align}
\overline {\Delta }_{\text {M/D/1/1-M}}^{\rm{blocking}}< \overline {\Delta }_{\text {M/U/1/1-M}}^{\rm{blocking}}< \overline {\Delta }_{\text {M/M/1/1-M}}^{\rm{blocking}}
\end{align}
when given the same service rate $\mu$.
\end{cor}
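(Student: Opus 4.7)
The plan is to treat Corollary \ref{cor:1} as a direct specialization of Theorem \ref{thm:1}. Since the general expression (\ref{equ:MG11M}) depends on the service distribution only through the second moment $\mathbb{E}[S^{2}]$, each of the three cases collapses to one second-moment computation followed by algebraic simplification, after which the ordering will fall out from comparing the three second moments.

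First I would evaluate $\mathbb{E}[S^{2}]$ under each density in (\ref{equ:U})--(\ref{equ:D}). The uniform law on $(0,2/\mu)$ gives $\mathbb{E}[S_{U}^{2}] = 4/(3\mu^{2})$ by direct integration; the negative exponential gives the standard value $\mathbb{E}[S_{E}^{2}] = 2/\mu^{2}$; and the deterministic service of length $1/\mu$ gives $\mathbb{E}[S_{D}^{2}] = 1/\mu^{2}$ from the Dirac mass. Substituting each of these into (\ref{equ:MG11M}) and rewriting everything in terms of $\rho = \lambda/\mu$ (using $\lambda+\mu = \mu(1+\rho)$ and $1/\mu = \rho/\lambda$) expresses both summands as multiples of $1/[\lambda(1+\rho)]$. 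Combining over the common denominators $3\lambda(1+\rho)$, $\lambda(1+\rho)$, and $2\lambda(1+\rho)$ respectively should yield exactly (\ref{equ:MU11M}), (\ref{equ:MM11M}), and (\ref{equ:MD11M}).

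For the ordering, the key observation is that the additive term $(\lambda+\mu)/(\lambda\mu)$ in (\ref{equ:MG11M}) is independent of the service law, so at a common $\mu$ the ranking of the three average AuDs is identical to the ranking of $\mathbb{E}[S^{2}]$. Because $1/\mu^{2} < 4/(3\mu^{2}) < 2/\mu^{2}$, the announced chain $\overline{\Delta}_{\text{M/D/1/1-M}}^{\rm{blocking}} < \overline{\Delta}_{\text{M/U/1/1-M}}^{\rm{blocking}} < \overline{\Delta}_{\text{M/M/1/1-M}}^{\rm{blocking}}$ follows at once. As a cross-check I would also subtract the closed forms directly; each pairwise difference should collapse to a quantity of the form $-\rho^{2}/[c\lambda(1+\rho)]$ for a positive constant $c$, confirming strict inequality for every $\rho>0$. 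There is no real obstacle here: the entire corollary is a moment substitution plus routine algebra, and the only place to be careful is the bookkeeping when merging the two terms of (\ref{equ:MG11M}) over a common denominator in terms of $\rho$ and $\lambda$.
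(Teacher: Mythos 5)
Your proposal is correct and follows essentially the same route as the paper: compute $\mathbb{E}[S^2]$ for each of the three densities, substitute into the general expression of \emph{Theorem}~\ref{thm:1}, and deduce the ordering from the fact that the remaining term is service-independent so the AuDs rank exactly as the second moments $1/\mu^2 < 4/(3\mu^2) < 2/\mu^2$. The algebra checks out, so no further changes are needed.
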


\begin{proof}
From (\ref{equ:U}), (\ref{equ:M}) and (\ref{equ:D}), it is not difficult to calculate the second moment of service time under different distributions, one can get
\begin{align} \label{equ:E}
\mathbb {E}\left [S_{\rm U}^2\right] = \frac{4}{3\mu^2},
\mathbb {E}\left [S_{\rm E}^2\right] = \frac{2}{\mu^2},
\mathbb {E}\left [S_{\rm D}^2\right] = \frac{1}{\mu^2}.
\end{align}
By inserting (\ref{equ:E}) into (\ref{equ:MG11M}), the average AuDs  can be readily obtained. It can also be found that for all the distributions with the same mean, the one that has the smallest $\mathbb  E[S^2]$ will achieve the lowest AuD. 
\end{proof}

%For the above three update-and-decision systems, we are ready to compare the average AuDs between them based on \emph{Corollary} \ref{cor:1}.

%\begin{thm}\label{thm:2}
%For the M/U/1/1-M, M/M/1/1-M and M/D/1/1-M bufferless systems with random decisions when given the same service rate $\mu$, the average AuDs have
%\begin{align}
%\overline {\Delta }_{\text {M/D/1/1-M}}^{\rm{blocking}}< \overline {\Delta }_{\text {M/U/1/1-M}}^{\rm{blocking}}< \overline {\Delta }_{\text {M/M/1/1-M}}^{\rm{blocking}}.
%\end{align}
%\end{thm}
%
%\begin{proof}
%By simplifying (\ref{equ:MU11M}), (\ref{equ:MM11M}) and (\ref{equ:MD11M}), \emph{Theorem} \ref{thm:2} can be readily proved.
%\end{proof}

Hence, it can be concluded that in terms of information freshness at decision moments, M/U/1/1-M bufferless system performs better than M/M/1/1-M bufferless system, but performs worse than M/D/1/1-M bufferless system when given the same service rate $\mu$. For service scheduling, the deterministic service time is therefore the best strategy to achieve the lowest average AuD when given Poisson arrivals and Poisson decisions.

\subsection{Missing Probability of Updates}
In some IoT based scenarios, the focus of the system is not only the information freshness, but also the efficiency of the system. For example, when UAVs perform search and rescue missions in disaster areas, the updates received by the central controller usually consume more transmission resources under harsh conditions (e.g., dusty environment). Whether these updates actually work is extremely important to the efficiency of decision-making process. In this case, how to use the limited number of updates more efficiently becomes the main concern. By observing this, we therefore propose a metric termed as missing probability to characterize the utilization rate of successful updates and thus analyze the efficiency of the M/G/1/1-M bufferless system by computing the missing probability.

In our model, the decision number $N_k$ for the $k^{th}$ successful update is a Poisson distributed random variable during the inter-departure time $Y_k$, which indicates that some updates might complete service but will not be used to make decisions. For example, the second successful update whose generation time is $t_2$ in Fig. \ref{fig:model2} is a missed update because no decision is made during the period $Y_3$. In other words, some updates might be futile. Thus, the more updates being used for making decisions, the more efficient the update-and-decision system will be. The authors of \cite{8887253} and \cite{9184001} proved that although the average AuD of the update-and-decision G/G/1-M system does not reduce as the decision rate $\nu$ increases due to the independence between them, one can still improve the utilization rate of successful updates through increasing decision rate. This is also viable for the G/G/1/1-M bufferless system and will be verified later in numerical results. Specifically, fewer updates will be missed for decisions when increasing the decision rate. The missing probability is specified as follows.

\begin{defn}(\emph{Missing Probability}\cite{8887253})
Missing probability $p_{\rm{mis}}$ of status updates is the limiting ratio between the number of successful updates missed for decisions and the number of total received updates when considering an infinite period.\footnote{\textcolor{black}{Note that the dropped updates are not taken into consideration when defining the missing probability. This is because the dropped updates consume much less system resources than updates received by the receiver, which means updates dropped in the queue are different from successful updates. Since the aim of defining the missing probability is to evaluate the system efficiency, it is more objective to define it as the ratio between updates missed for decisions and total successful updates instead of the ratio between updates missed or dropped and the total generated updates.}}
\end{defn}

\begin{thm}\label{thm:3}
For the M/G/1/1-M bufferless system with random decisions, the missing probability is
\begin{align}\label{equ:pmis}
p_{\rm {mis}}^{\text {M/G/1/1-M}} = G_{X}\left ({-\nu }\right)G_{S}\left ({-\nu }\right),
\end{align}
where $G_{X}(s) = \mathbb {E}[e^{sX}]$ is the MGF of inter-arrival time $X$ and $G_{S}(s) = \mathbb {E}[e^{sS}]$ is that of service time $S$.
\end{thm}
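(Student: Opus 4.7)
The plan is to translate the missing probability into the probability that zero decisions occur in a single inter-departure interval $Y$ between two consecutive successful updates, then compute that via the MGF by decomposing $Y$ into its independent constituents.

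First, I would fix the event precisely. Since the $(k-1)^{\text{th}}$ successful update is the candidate of the freshest stored packet throughout $(t_{k-1}', t_k')$, it is wasted (missed) for decisions exactly when $N_k = 0$, i.e., when no Poisson decision epoch falls inside the inter-departure interval $Y_k = t_k' - t_{k-1}'$. Because decisions form a Poisson process of rate $\nu$ independent of the update/service processes, conditioning on $Y_k$ gives $\Pr(N_k = 0 \mid Y_k) = e^{-\nu Y_k}$, so that $\Pr(N_k = 0) = \mathbb{E}[e^{-\nu Y}] = G_Y(-\nu)$.

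Next, I would decompose $Y_k$ into the server idle interval $I_k = t_k - t_{k-1}'$ and the service time $S_k = t_k' - t_k$, which are independent. The idle interval runs from the departure instant $t_{k-1}'$ (when the server becomes free) until the first Poisson arrival after it; by the memoryless property of the arrival process this has the same $\text{Exp}(\lambda)$ law as the paper's $X$, so $G_{I}(-\nu) = G_X(-\nu)$. Because $I_k$ and $S_k$ are independent, the MGF factorizes:
\begin{align}
G_Y(-\nu) \;=\; \mathbb{E}\!\left[e^{-\nu(I+S)}\right] \;=\; G_X(-\nu)\, G_S(-\nu),
\end{align}
which is the claimed expression.

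Finally, to identify this per-update probability with the long-run ratio in the definition of $p_{\mathrm{mis}}$, I would invoke the ergodicity of the regenerative sequence $\{Y_k\}$: the successful updates form an i.i.d.\ renewal-type sequence (each cycle is an idle period followed by an independent service time), so by the strong law of large numbers the limiting fraction of successful updates with $N_k = 0$ equals $\Pr(N_k = 0)$. The only mild subtlety, which I would flag explicitly, is justifying that the idle period $I_k$ really has the same exponential law as $X$ independently of $S_{k-1}$; this follows because arrivals are Poisson and the server's past has no effect on future arrivals, so the first arrival after $t_{k-1}'$ is $\text{Exp}(\lambda)$ and is the (a.s.\ unique) successful update labeled $k$. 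That is the only conceptual step; the rest is MGF algebra.
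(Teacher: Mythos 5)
Your proposal is correct and follows essentially the same route as the paper: identify the missing probability with $\Pr\{N_k=0\}$, condition on $Y$ to get $\mathbb{E}[e^{-\nu Y}]$, decompose $Y$ into an exponential idle period (distributed as $X$ by memorylessness) plus an independent service time, and factor the MGF. The only difference is that you spell out the ergodicity and memorylessness justifications that the paper leaves implicit, which is a welcome but minor elaboration.
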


\begin{proof}
It is worth noting that the missing probability $p_{\rm mis}$ is equal to the probability that no decision is made during the departure interval $Y$, i.e., Pr\{$N_k = 0$\}. Given the inter-departure time $Y=y$, $\Pr\{N_k = 0\}=e^{-\nu y}$ follows from the Poisson decision process. Hence, it has
\begin{align}
p_{\rm {mis}}^{\text {M/G/1/1-M}} &= \mathbb {E}\left [{e^{-\nu Y}}\right] \nonumber \\
&= \mathbb {E}\left [{e^{-\nu  X }}\right] \mathbb {E}\left [{e^{-\nu S }}\right]\nonumber \\
&= G_{X}\left ({-\nu }\right)G_{S}\left ({-\nu }\right),
\end{align}
where the first equation follows from taking expectation over $Y$.
\end{proof}

By comparing the missing probability of M/G/1-M updating system with infinite buffer size to that of M/G/1/1-M bufferless system, the following theorem can be readily obtained.
\begin{thm}\label{thm:100}
For the missing probabilities of update-and-decision M/G/1-M and M/G/1/1-M systems, it has
\begin{align}
p_{\rm {mis}}^{\text {M/G/1/1-M}} \leq 	\ p_{\rm {mis}}^{\text {M/G/1-M}}.
\end{align}
\end{thm}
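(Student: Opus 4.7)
The plan is to derive the missing probability of the infinite-buffer M/G/1-M system in a form directly comparable to \emph{Theorem} \ref{thm:3}, and then finish with a one-line algebraic inequality. Repeating the first step of the proof of \emph{Theorem} \ref{thm:3} verbatim, the missing probability of any single-server update-and-decision system with Poisson decisions equals $\mathbb{E}[e^{-\nu Y_{\infty}}]$, where $Y_{\infty}$ denotes the inter-departure interval between two consecutive served updates at the decision maker. So the task reduces to characterising the distribution of $Y_{\infty}$ in the infinite-buffer queue and comparing the two Laplace transforms.

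In the bufferless case, a departure is always followed by an idle server, so the next inter-departure equals an $\mathrm{Exp}(\lambda)$ waiting time (by memorylessness of Poisson arrivals) plus a service time $S$, giving $\mathbb{E}[e^{-\nu Y}] = G_{X}(-\nu)G_{S}(-\nu)$ as already recorded in \emph{Theorem} \ref{thm:3}. For the infinite-buffer M/G/1, I would condition on whether the queue is empty immediately after a departure: by the classical stationary analysis of M/G/1 (equivalently the busy-fraction identity, with stability $\rho<1$ assumed), the server is busy a fraction $\rho$ of the time, so with probability $\rho$ the queue is non-empty at a departure and $Y_{\infty}=S$, whereas with probability $1-\rho$ the queue is empty and $Y_{\infty}=\tilde{X}+S$ with $\tilde{X}\sim\mathrm{Exp}(\lambda)$ by memorylessness.

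Combining the two cases gives
\begin{equation*}
p_{\rm {mis}}^{\text {M/G/1-M}} = G_{S}(-\nu)\bigl[\rho+(1-\rho)G_{X}(-\nu)\bigr],
\end{equation*}
and the claimed inequality follows from the elementary bound
\begin{equation*}
\rho+(1-\rho)G_{X}(-\nu)-G_{X}(-\nu) = \rho\bigl[1-G_{X}(-\nu)\bigr]\geq 0,
\end{equation*}
since $G_{X}(-\nu)=\mathbb{E}[e^{-\nu X}]\in(0,1]$ for $\nu>0$. Multiplying through by $G_{S}(-\nu)\geq 0$ yields $p_{\rm {mis}}^{\text {M/G/1-M}}\geq G_{X}(-\nu)G_{S}(-\nu)=p_{\rm {mis}}^{\text {M/G/1/1-M}}$, exactly as claimed.

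The main obstacle is the middle step, namely the case-split for $Y_{\infty}$ in the infinite-buffer system. It depends on the classical fact that the stationary departure-epoch queue-length distribution in M/G/1 coincides with its time-average counterpart, and in particular that a fraction $1-\rho$ of departures leave the system empty. Once this structural fact is cited, the rest of the argument is a single line of algebra, and the inequality admits the intuitive interpretation that buffering shortens inter-departure intervals (by eliminating idle gaps whenever work is already waiting) and thereby raises the probability that a given interval contains no decision epoch.
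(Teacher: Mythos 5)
Your proposal is correct, and its final step is the same one-line comparison the paper makes; the difference lies in how you obtain the infinite-buffer formula. The paper simply cites \cite[Theorem~3]{9184001} for $p_{\rm mis}^{\text{M/G/1-M}} = G_{S}(-\nu)\,\frac{\rho\nu+\lambda}{\lambda+\nu}$ and observes that this exceeds $G_{X}(-\nu)G_{S}(-\nu)$ with $G_{X}(-\nu)=\lambda/(\lambda+\nu)$, whereas you re-derive that formula from first principles by splitting inter-departure intervals according to whether the departing update leaves the queue empty (probability $1-\rho$, interval $\tilde{X}+S$) or non-empty (probability $\rho$, interval $S$), invoking the classical fact that the departure-epoch queue-length distribution of M/G/1 matches the time-average one. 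Your expression $G_{S}(-\nu)\bigl[\rho+(1-\rho)G_{X}(-\nu)\bigr]$ reduces exactly to the cited one, so the two proofs agree; yours is self-contained and makes the structural reason for the inequality transparent (buffering eliminates the idle gap from a fraction $\rho$ of intervals, shortening them and raising the chance that no decision lands inside), at the cost of assuming stability $\rho<1$ and the embedded-chain identity, both of which are standard and are implicitly required by the citation route as well. The only stylistic nit is that your intermediate sentence equates the time-average busy fraction with the departure-epoch non-empty probability before you justify it; since you do state the needed level-crossing/PASTA fact explicitly afterwards, there is no gap.
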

\begin{proof}
According to \cite[Theorem 3]{9184001}, the missing probability of M/G/1-M system is
\begin{align}
p_{\rm {mis}}^{\text {M/G/1-M}}=G_{S}\left ({-\nu }\right) \frac {\rho \nu +\lambda }{\lambda +\nu }.
\end{align}

Considering the fact that the M/G/1/1-M bufferless system has $G_{X}({-\nu }) = \lambda/(\lambda + \nu)$, it is clear that the missing probability of the M/G/1-M system is larger than and will reduce to the same level as the M/G/1/1-M bufferless system has if system load $\rho \to 0$.
\end{proof}

Likewise, for the bufferless M/G/1/1-M systems with typical service time distributions, we shall calculate their missing probabilities to characterize the system efficiency. With the same assumptions on the service time distribution in previous subsection, we have \emph{Corollary} \ref{cor:2}.

\begin{cor}\label{cor:2}
For the M/U/1/1-M, M/M/1/1-M and M/D/1/1-M bufferless systems with random decisions, the missing probabilities are given by
\begin{align}
p_{\rm {mis}}^{\text {M/U/1/1-M}} &= \frac{\lambda \mu}{2\nu \left( \lambda + \nu \right)}\left( 1-e^{-2 \frac{\nu}{\mu}} \right),
\\p_{\rm {mis}}^{\text {M/M/1/1-M}} &= \frac{\lambda \mu}{\left( \lambda + \nu \right)\left( \mu + \nu \right)},
\\ p_{\rm {mis}}^{\text {M/D/1/1-M}} &= \frac{\lambda}{\lambda + \nu}e^{-\frac{\nu}{\mu}}.
\end{align}
\end{cor}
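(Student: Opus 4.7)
The plan is to apply Theorem 3 directly. Since arrivals form a Poisson process with rate $\lambda$, the inter-arrival time $X$ is exponentially distributed, so its moment generating function evaluated at $-\nu$ is
\begin{align}
G_{X}(-\nu) = \int_{0}^{\infty} \lambda e^{-\lambda t} e^{-\nu t}\, dt = \frac{\lambda}{\lambda+\nu},
\end{align}
regardless of the service time distribution. Hence, by Theorem~\ref{thm:3}, each missing probability reduces to $\frac{\lambda}{\lambda+\nu}\, G_{S}(-\nu)$, and the task boils down to computing $G_{S}(-\nu)$ under the three PDFs in (\ref{equ:U}), (\ref{equ:M}), and (\ref{equ:D}).

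Next, I would compute $G_{S}(-\nu)$ case by case. For the uniform service time, integrating $(\mu/2)e^{-\nu t}$ over $(0, 2/\mu)$ yields $G_{S_{\rm U}}(-\nu) = \mu(1-e^{-2\nu/\mu})/(2\nu)$. For the negative exponential service time, the standard Laplace transform of the exponential density gives $G_{S_{\rm E}}(-\nu) = \mu/(\mu+\nu)$. For the deterministic service time, the sifting property of the Dirac delta immediately yields $G_{S_{\rm D}}(-\nu) = e^{-\nu/\mu}$. Multiplying each of these by $\lambda/(\lambda+\nu)$ gives precisely the three expressions in the statement of the corollary.

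Since Theorem~\ref{thm:3} already does all the probabilistic work (conditioning on $Y$, factoring the MGF of $Y=X+S$ using independence), the remainder is purely a matter of evaluating three elementary integrals/transforms. There is no real obstacle here; the only small care point is handling the finite support of the uniform density correctly so that the factor $(1-e^{-2\nu/\mu})/\nu$ emerges with the right denominator, but this is routine and no special technique is required.
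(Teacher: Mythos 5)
Your proposal is correct and matches the paper's own proof: both invoke Theorem~\ref{thm:3}, note that $G_{X}(-\nu)=\lambda/(\lambda+\nu)$ for Poisson arrivals, and evaluate $G_{S}(-\nu)$ from the three densities in (\ref{equ:U})--(\ref{equ:D}) before multiplying. All three transform computations check out, so there is nothing to add.
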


\begin{proof}
From (\ref{equ:U}), (\ref{equ:M}) and (\ref{equ:D}), one can get
\begin{align}  \label{equ:G}
G_{\rm \emph{S}U}\left ({-\nu }\right) =& \frac{\mu }{2\nu}\left( 1-e^{-2 \frac{\nu}{\mu}} \right), \nonumber \\
G_{\rm \emph{S}E}\left ({-\nu }\right) =& \frac{\mu}{\mu + \nu},\nonumber \\
G_{\rm \emph{S}D}\left ({-\nu }\right) =& e^{-\frac{\nu}{\mu}}.
\end{align}
Also note that it has $G_{X}({-\nu }) = \lambda/(\lambda + \nu)$ in the M/G/1/1-M bufferless system. By combining (\ref{equ:pmis}) and (\ref{equ:G}), \emph{Corollary} \ref{cor:2} can be readily obtained.
\end{proof}

The relationship between missing probabilities in \emph{Corollary} \ref{cor:2} is given in the following theorem.

\begin{thm} \label{thm:101}
For the M/U/1/1-M, M/M/1/1-M and M/D/1/1-M bufferless systems with random decisions when given the same service rate $\mu$ as well as rate of decisions $\nu$, it has
\begin{align}
p_{\rm {mis}}^{\text {M/D/1/1-M}} < p_{\rm {mis}}^{\text {M/U/1/1-M}} < p_{\rm {mis}}^{\text {M/M/1/1-M}}.
\end{align}
\end{thm}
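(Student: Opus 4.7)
The plan is to factor out the common term $\lambda/(\lambda+\nu)$ appearing in all three missing probabilities of Corollary 2, so that the claim reduces to a pair of inequalities among the MGFs of the service time evaluated at $-\nu$, namely
\begin{align}
e^{-\nu/\mu} < \frac{\mu}{2\nu}\bigl(1-e^{-2\nu/\mu}\bigr) < \frac{\mu}{\mu+\nu}.
\end{align}
Setting $x = \nu/\mu > 0$ converts this into the dimensionless form
\begin{align}
e^{-x} < \frac{1-e^{-2x}}{2x} < \frac{1}{1+x},
\end{align}
which must be shown for every $x>0$.

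For the left inequality, I would appeal to Jensen's inequality: the function $t\mapsto e^{-\nu t}$ is strictly convex and all three service-time distributions share the common mean $\mathbb{E}[S]=1/\mu$, so for any non-degenerate $S$ one has $\mathbb{E}[e^{-\nu S}]>e^{-\nu/\mu}=G_{SD}(-\nu)$. In particular $G_{SU}(-\nu)>G_{SD}(-\nu)$. If a self-contained argument is preferred, I would define $f(x)=1-2xe^{-x}-e^{-2x}$ and check $f(0)=0$ together with $f'(x)=2e^{-x}(e^{-x}-1+x)>0$ for $x>0$, since $e^{-x}>1-x$ on $(0,\infty)$.

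For the right inequality, which is equivalent to $(1+x)e^{-2x}>1-x$, I would set $g(x)=(1+x)e^{-2x}-(1-x)$. A short computation gives $g(0)=0$, $g'(0)=0$, and $g''(x)=4xe^{-2x}>0$ for $x>0$, so $g'$ is strictly increasing on $(0,\infty)$ and hence positive there, which makes $g$ strictly increasing and therefore positive on $(0,\infty)$ as well. (Note that for $x\ge 1$ the right-hand side $1-x$ is nonpositive, so the inequality is immediate; the derivative argument handles the interesting range $0<x<1$.)

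The main obstacle is the middle comparison between the uniform and negative exponential service times. Jensen's inequality alone only certifies that both of these MGFs strictly exceed the deterministic one $e^{-\nu/\mu}$; it does not order them against each other, and a glance at second moments ($\mathbb{E}[S_U^2]=4/(3\mu^2)<2/\mu^2=\mathbb{E}[S_E^2]$) only hints at the right direction without proving it, because the MGF depends on all moments. The two-derivative calculus argument for $g$ above is the cleanest way past this obstacle. Combining the two pointwise inequalities with the common prefactor $\lambda/(\lambda+\nu)$ then yields the desired chain $p_{\rm mis}^{\text{M/D/1/1-M}}<p_{\rm mis}^{\text{M/U/1/1-M}}<p_{\rm mis}^{\text{M/M/1/1-M}}$.
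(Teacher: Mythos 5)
Your proposal is correct. The reduction to the scalar chain $e^{-x} < \bigl(1-e^{-2x}\bigr)/(2x) < 1/(1+x)$ with $x=\nu/\mu$ after factoring out $G_X(-\nu)=\lambda/(\lambda+\nu)$ is exactly the paper's starting point, and your calculus verifications are sound: $f'(x)=2e^{-x}\left(e^{-x}-1+x\right)>0$ and $g''(x)=4xe^{-2x}>0$ both check out. Where you diverge is in the choice of auxiliary functions and, more substantively, in the deterministic-versus-uniform comparison. The paper handles that comparison by studying the ratio $\bigl(e^{x}-e^{-x}\bigr)/(2x)$ and showing it exceeds $1$ via a nested two-derivative argument; you instead invoke Jensen's inequality on the strictly convex map $s\mapsto e^{-\nu s}$, which immediately gives $G_S(-\nu)>e^{-\nu/\mu}$ for \emph{any} non-degenerate service time with mean $1/\mu$. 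That is a genuinely stronger statement: it shows at once that the deterministic service time minimizes the missing probability among all service distributions with the same mean, not merely among the three listed, and it explains \emph{why} rather than just verifying the inequality. For the uniform-versus-exponential comparison the paper works with the difference $2t/(1+t)+e^{-2t}-1$ and one level of auxiliary function, while you work with $(1+x)e^{-2x}-(1-x)$ and two derivatives; these are interchangeable in difficulty. You are also right that this middle comparison is the genuinely delicate one — Jensen and second moments do not settle it — so isolating it as the main obstacle is an accurate reading of the problem.
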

\begin{proof}
See Appendix \ref{app:thm101}.
\end{proof}

Based on \emph{Theorem} \ref{thm:101}, it is clear that for the given arrival rate $\lambda$, service rate $\mu$ and decision rate $\nu$, the deterministic service time achieves the lowest missing probability while the negative exponentially distributed service time achieves the highest one. In conclusion, deterministic service time reduces more wastage of the system resources and allows the system to be more efficient in the M/G/1/1-M bufferless system. Also, by taking \emph{Corollary} \ref{cor:1} into account, it can be found that a deterministic service time is more suitable than other random service time distributions for the update-and-decision system with a length-1 blocking queue and random decisions, which will be further verified later in Section \ref{sec:simula}.

\section{Average AuD with Blocking Queue and Periodic Decisions}\label{sec:D}
Although it has been shown that the deterministic service time ensures better system performance for Poisson decision system, it is still not clear which decision interval is better: random or deterministic? Therefore, in this part, we study the bufferless system performance when the decision interval is deterministic. We first derive the average AuD of the M/G/1/1-D bufferless system, based on which, we then calculate and compare the average AuDs of M/G/1/1-D bufferless systems with different types of service time distribution, i.e., the uniform/negative exponential/deterministic distributions. Finally, similar to the case of M/G/1/1-M updating system, we characterize the system efficiency by computing the missing probability $p_{\rm mis}$.

Since the decision process is no longer a Poisson process and the decision events are no longer uniformly distributed on the interval, the average AuD cannot be calculated directly. To handle this, we make an approximation that the decision epochs still follow a uniform distribution within every departure interval $Y_k$ in this case, which only has a negligible impact on the accuracy of the results and its accuracy will be verified later in simulations. Moreover, we assume that $\nu$ is an integer multiple of $\mu$ to ensure the decision supplies, i.e., $\nu = m_0\mu$. Also, aiming at controlling the missing probability to remain in a low level and be observable, $m_0$ is assumed to be larger than unity, i.e., $m_0 \geq 1$.

\subsection{Average AuD and Missing Probability}
Since the inter-arrival time $X_k$ follows an exponential distribution and inter-decision time $Z_j$ is fixed, we have their PDFs $f_{X}(t) = \lambda e^{-\lambda t}$ and $f_{Z}(t) = \delta \left ({t-1/\nu }\right)$.

\begin{thm} \label{thm:5}
For the M/G/1/1-D bufferless system with deterministic decisions, the average AuD is approximated by
\begin{align}\label{equ:MG11D}
&\overline{\Delta}_{\text {M/G/1/1-D}}^{\rm{blocking}} \nonumber \\
&=  \frac {\mathbb {E}\left [{T_{k-1}}\right]\left (\mathbb {E}\left [{N_{k}^{(1)}}\right]+\mathbb {E}\left [{N_{k}^{(2)}}\right] \right)}{\nu\mathbb {E}\left [{Y_k}\right]}\nonumber \\
& \; \quad +\frac {\mathbb {E}\left [{\left ({N_{k}^{(1)}}\right)^{2}}\right]+\mathbb {E} \left [{\left ({N_{k}^{(2)}}\right)^{2}}\right]+2{\mathbb {E}\left [{N_{k}^{(1)}}\right] \mathbb {E}\left [{N_{k}^{(2)}}\right]}}{2{\nu^2}\mathbb {E}\left [{Y_{k}}\right]},
\end{align}
where $N_k^{(1)}$ and $N_k^{(2)}$ are the numbers of decisions made before and after the arrival epoch $t_k$ within the corresponding inter-departure time $Y_k$.
\end{thm}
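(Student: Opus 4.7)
The plan is to apply a renewal-reward argument in which the renewal cycle is taken to be the inter-departure interval $Y_k = t_k' - t_{k-1}'$ between consecutive successful updates. Throughout such a cycle the most recently delivered update is the $(k-1)$-th, so the AuD at any decision epoch $\tau$ lying in $(t_{k-1}', t_k']$ equals $\tau - t_{k-1}$. By the renewal-reward theorem, the long-run average AuD equals the ratio of the expected cycle-wise sum of AuDs to the expected number of decisions per cycle. Under deterministic inter-decision times of length $1/\nu$, the number of decisions in an interval of length $Y_k$ is close to $\nu Y_k$, so $\mathbb{E}[N_k^{(1)}+N_k^{(2)}] \approx \nu\,\mathbb{E}[Y_k]$, which reproduces the denominator in (\ref{equ:MG11D}).

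To evaluate the numerator I would split each cycle at the arrival instant $t_k$. For a decision $\tau$ in the pre-arrival sub-interval $(t_{k-1}', t_k)$ I write $\tau - t_{k-1} = (\tau - t_{k-1}') + T_{k-1}$; for a decision in the post-arrival sub-interval $(t_k, t_k')$ I write $\tau - t_{k-1} = (\tau - t_k) + (t_k - t_{k-1}') + T_{k-1}$. Summing across both sub-intervals, the cycle reward decomposes into (i) a global term $(N_k^{(1)}+N_k^{(2)})\,T_{k-1}$, (ii) a cross term $N_k^{(2)}(t_k-t_{k-1}')$ that captures the pre-arrival waiting time accumulated by post-arrival decisions, and (iii) two internal sums of within-sub-interval offsets $\sum u_j$ and $\sum v_j$.

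At this point I would invoke the approximation announced in the statement, namely that decision epochs are uniformly distributed within $Y_k$. Conditionally on $N_k^{(1)}$ decisions lying in a pre-arrival sub-interval of length $L$, this makes the sum of their left-offsets behave like $N_k^{(1)} L/2$; combining with the deterministic relation $L \approx N_k^{(1)}/\nu$ yields $\sum u_j \approx (N_k^{(1)})^2/(2\nu)$, and similarly $\sum v_j \approx (N_k^{(2)})^2/(2\nu)$ and $N_k^{(2)}(t_k-t_{k-1}') \approx N_k^{(1)} N_k^{(2)}/\nu$. Taking expectations, I would use independence to factor the products: by the memorylessness of the Poisson arrivals the gap $t_k-t_{k-1}'$ is independent of $T_{k-1}$, and $T_k$ is independent of everything earlier, so $T_{k-1}$ is independent of both $N_k^{(1)}$ and $N_k^{(2)}$, and the two counts themselves are driven by the independent lengths of the two sub-intervals, giving $\mathbb{E}[N_k^{(1)} N_k^{(2)}] = \mathbb{E}[N_k^{(1)}]\,\mathbb{E}[N_k^{(2)}]$. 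Assembling the pieces reproduces exactly the numerator in (\ref{equ:MG11D}), and dividing by $\nu\,\mathbb{E}[Y_k]$ completes the derivation.

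The main obstacle I expect is rigorously justifying the uniform-approximation step. With deterministic decisions, the offset of the first decision inside $Y_k$ is fixed by the alignment of the decision grid with $t_{k-1}'$, which couples across cycles and prevents the positions inside $Y_k$ from being exactly uniform. Quantifying this coupling, showing that the induced error is only a lower-order correction to both the cycle reward and the decision count, and verifying that the edge effects at the two endpoints of each cycle do not distort the factorisation $\mathbb{E}[N_k^{(1)} N_k^{(2)}] = \mathbb{E}[N_k^{(1)}]\mathbb{E}[N_k^{(2)}]$, is the delicate part; this is presumably why the authors defer the validation of the approximation to the numerical experiments in Section \ref{sec:simula} rather than bounding the error analytically.
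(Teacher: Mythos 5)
Your proposal is correct and follows essentially the same route as the paper: a renewal--reward argument over the inter-departure cycle $Y_k$, the uniform-position approximation for the decision epochs, the normalization $\lim_{T\to\infty} N_T/K=\nu\,\mathbb{E}[Y_k]$, and an independence assumption to factor $\mathbb{E}[N_k^{(1)}N_k^{(2)}]=\mathbb{E}[N_k^{(1)}]\,\mathbb{E}[N_k^{(2)}]$. The only cosmetic difference is bookkeeping: the paper writes the $j$-th AuD in the cycle directly as $T_{k-1}+\eta_k+(j-1)/\nu$ with $\eta_k$ uniform on $[0,1/\nu]$ and sums the arithmetic progression, whereas you split the cycle at $t_k$ and carry an explicit cross term $N_k^{(2)}(t_k-t_{k-1}')$; both yield the identical numerator.
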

\begin{proof}
See Appendix \ref{app:thm5}.
\end{proof}

Similarly, by replacing the general service time by typical service times we have mentioned in previous section, we investigate the average AuDs of the M/U/1/1-D, M/M/1/1-D as well as M/D/1/1-D bufferless systems, and the impact of service time distribution on system timeliness.

\begin{cor}\label{cor:3}
For the M/U/1/1-D, M/M/1/1-D and M/D/1/1-D bufferless systems with deterministic decision intervals, the average AuDs are approximated by
\begin{align} \overline{\Delta}_{\text {M/U/1/1-D}}^{\rm{blocking}}=&\frac {2\rho m_0 + 4m_0 +\rho + 1}{2\mu m_0 \left ({1+\rho }\right) } +\frac {\left ({1+\beta}\right)}{2\mu m_{0}\left ({1+\rho }\right)\left ({1-\beta}\right)} \nonumber\\
&+\frac {\rho \left ({8m_0^3+18m_0^2+13m_0+3}\right)}{12\mu m_{0}^3\left ({1+\rho}\right)}, \label{equ:MU11D}\\
\overline{\Delta}_{\text {M/M/1/1-D}}^{\rm{blocking}}=&\frac {2+\rho}{\mu \left ({1+\rho }\right)}+ \frac {\rho \left ({1+\alpha}\right)}{2\mu m_{0}\left ({1+\rho }\right)\left ({1-\alpha}\right)} \nonumber \\ 
&+\frac {\left ({1+\beta}\right)}{2\mu m_{0}\left ({1+\rho }\right)\left ({1-\beta}\right)}, \label{equ:MM11D}\\
\overline{\Delta}_{\text {M/D/1/1-D}}^{\rm{blocking}}=&\frac {4+3\rho}{2\mu \left ({1+\rho }\right)}+\frac {\left ({1+\beta}\right)}{2\mu m_{0}\left ({1+\rho }\right)\left ({1-\beta}\right)},\label{equ:MD11D}
\end{align}
in which $\alpha = e^{-\mu / \nu}$ and $\beta = e^{-\lambda / \nu}$.
\end{cor}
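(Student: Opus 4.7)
The plan is to specialize Theorem~\ref{thm:5} to each of the three service-time distributions, reducing the task to computing six moments per case. Two of these moments are shared across all cases: $\mathbb{E}[T_{k-1}]=\mathbb{E}[S_{k-1}]=1/\mu$, because in the blocking queue the sojourn time of a successful update equals its service time, and $\mathbb{E}[Y_k]=\mathbb{E}[I_k]+\mathbb{E}[S_k]=1/\lambda+1/\mu=(1+\rho)/(\mu\rho)$, where $I_k\sim\mathrm{Exp}(\lambda)$ is the idle period starting at $t_{k-1}'$ by the memoryless property of Poisson arrivals. The moments of $N_k^{(1)}$ are also shared, since $N_k^{(1)}$ counts periodic decisions in the exponential idle interval only. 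Treating the phase $\phi\in[0,1/\nu]$ at $t_{k-1}'$ as uniform (consistent with the approximation introduced above Theorem~\ref{thm:5}), the tail probabilities $\Pr(I_k>\phi+j/\nu)=e^{-\lambda\phi}e^{-\lambda j/\nu}$ form a geometric series with ratio $\beta=e^{-\lambda/\nu}$. Summing the series and averaging over $\phi$ yields $\mathbb{E}[N_k^{(1)}]=m_0/\rho$ and a second moment of the form $m_0(1+\beta)/(\rho(1-\beta))$, which accounts for the $(1+\beta)/(2\mu m_0(1+\rho)(1-\beta))$ term appearing identically in all three target expressions~(\ref{equ:MU11D})--(\ref{equ:MD11D}).

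The case-specific step is then to evaluate $\mathbb{E}[N_k^{(2)}]$ and $\mathbb{E}[(N_k^{(2)})^2]$ for each service distribution. The M/D/1/1-D case is the cleanest: with $S_k=m_0/\nu$ exactly, a half-open interval of that length always contains exactly $m_0$ grid points, so $N_k^{(2)}=m_0$ deterministically; substitution into Theorem~\ref{thm:5} and collection over the common denominator $2\mu(1+\rho)$ reproduces~(\ref{equ:MD11D}). For M/M/1/1-D, the same geometric-tail computation used for $N_k^{(1)}$ goes through verbatim with $\mu$ in place of $\lambda$, producing the new parameter $\alpha=e^{-\mu/\nu}$ and hence the additional $\rho(1+\alpha)/(2\mu m_0(1+\rho)(1-\alpha))$ summand in~(\ref{equ:MM11D}). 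For M/U/1/1-D, the bounded support of $S_k\sim U(0,2/\mu)$ forces the count to be a finite sum $\sum_{j=0}^{2m_0-1}\Pr(S_k>\phi'+j/\nu)$ of truncated linear probabilities; summing this expression in closed form and then averaging over the uniform phase produces polynomial moments in $m_0$, which, after substitution and simplification, yield the cubic $8m_0^3+18m_0^2+13m_0+3$ in~(\ref{equ:MU11D}).

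The main technical obstacle will be the M/U/1/1-D case. The bounded uniform support breaks the geometric identity that made the other two cases almost automatic, and the summation must be carried out together with the inner sums $\sum_{j=0}^{2m_0-1} j$ and $\sum_{j=0}^{2m_0-1} j^2$; care is needed at the boundary index $j=2m_0-1$, whose inclusion depends on whether the phase $\phi'$ exceeds a threshold, to extract the correct polynomial coefficients and to avoid off-by-one errors that would contaminate the cubic term. Once the six moments per case are in hand, the remaining work is routine algebra: substitute into Theorem~\ref{thm:5}, clear $\mathbb{E}[Y_k]=(1+\rho)/(\mu\rho)$, group the constant, $\alpha$-dependent, and $\beta$-dependent contributions separately, and match each block against the corresponding term in~(\ref{equ:MU11D})--(\ref{equ:MD11D}).
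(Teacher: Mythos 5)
Your proposal is correct and follows essentially the same route as the paper's proof: specialize Theorem~\ref{thm:5}, compute the first and second moments of $N_k^{(1)}$ and $N_k^{(2)}$ for each service distribution by averaging over a uniform decision phase on $[0,1/\nu]$ (geometric tails with ratio $\beta$ for the exponential idle period, ratio $\alpha$ for exponential service, a deterministic count $m_0$ for deterministic service, and a truncated finite sum for uniform service), then substitute using $\mathbb{E}[T]=1/\mu$ and $\mathbb{E}[Y]=1/\lambda+1/\mu$. The only cosmetic difference is that you obtain the moments via indicator/tail sums whereas the paper writes out the PMFs $\Pr\{N^{(i)}=j\}$ explicitly; your flagged concern about boundary indices in the M/U/1/1-D case is well placed, since that is exactly where the paper's uniform-PMF shortcut glosses over edge effects.
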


\begin{proof}
See Appendix \ref{app:cor3}.
\end{proof}

According to \emph{Corollary} \ref{cor:3}, we prove \emph{Theorem} \ref{thm:6} for service scheduling, which shows the average AuD relationship between the above three systems.

\begin{thm}\label{thm:6}
For the M/U/1/1-D, M/M/1/1-D and M/D/1/1-D bufferless systems with deterministic decision intervals when given the same service rate $\mu$ as well as decision rate $\nu$, there exists a positive integer $m_{0}^{*}$ such that
\begin{align}
\begin{cases}
\overline{\Delta}_{\text {M/D/1/1-D}}^{\rm{blocking}} < \overline{\Delta}_{\text {M/U/1/1-D}}^{\rm{blocking}} < \overline{\Delta}_{\text {M/M/1/1-D}}^{\rm{blocking}} &\text {if}~m_{0}> m_{0}^{*},
\\ \overline{\Delta}_{\text {M/D/1/1-D}}^{\rm{blocking}} < \overline{\Delta}_{\text {M/M/1/1-D}}^{\rm{blocking}} < \overline{\Delta}_{\text {M/U/1/1-D}}^{\rm{blocking}} &\text {if}~m_{0}\leq m_{0}^{*},
\end{cases}
\end{align}
\end{thm}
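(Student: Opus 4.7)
The plan is to split Theorem~\ref{thm:6} into two stages: first, show that $\overline{\Delta}_{\text{M/D/1/1-D}}^{\rm{blocking}}$ is strictly the smallest of the three for every integer $m_0\geq 1$; second, show that the difference between the M/U and M/M expressions changes sign exactly once, so that a well-defined integer threshold $m_0^*$ exists. The analysis rests on Corollary~\ref{cor:3}, noting that the term $(1+\beta)/(2\mu m_0(1+\rho)(1-\beta))$ cancels in every pairwise comparison.

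For the first stage, the inequality $\overline{\Delta}_{\text{M/M/1/1-D}}^{\rm{blocking}} > \overline{\Delta}_{\text{M/D/1/1-D}}^{\rm{blocking}}$ reduces algebraically to $(1+\alpha)/[m_0(1-\alpha)] > 1$. Substituting $u=1/m_0\in(0,1]$ turns this into $1+e^{-u} > (1-e^{-u})/u$, which holds because $(1-e^{-u})/u < 1 < 1+e^{-u}$. The inequality $\overline{\Delta}_{\text{M/U/1/1-D}}^{\rm{blocking}} > \overline{\Delta}_{\text{M/D/1/1-D}}^{\rm{blocking}}$ reduces, after clearing the denominator $12\mu m_0^3(1+\rho)$, to $2\rho m_0^3 + (6+24\rho)m_0^2 + 13\rho m_0 + 3\rho$, which is positive term-by-term.

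For the second stage, define $G(m_0) = 12\mu m_0^3(1+\rho)\bigl[\overline{\Delta}_{\text{M/U/1/1-D}}^{\rm{blocking}} - \overline{\Delta}_{\text{M/M/1/1-D}}^{\rm{blocking}}\bigr]$. A direct calculation using $(1+\alpha)/(1-\alpha)=\coth(1/(2m_0))$ gives $G(m_0) = -6\rho m_0^2 \coth(1/(2m_0)) + 8\rho m_0^3 + (6+24\rho)m_0^2 + 13\rho m_0 + 3\rho$. At $m_0 = 1$, $G(1) = 6 + 6\rho(8 - \coth(1/2)) > 0$ since $\coth(1/2) \approx 2.164 < 8$. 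For $m_0\to\infty$, the Laurent expansion $\coth(1/(2m_0)) = 2m_0 + 1/(6m_0) + O(m_0^{-3})$ yields $G(m_0) = -4\rho m_0^3 + (6+24\rho)m_0^2 + 12\rho m_0 + 3\rho + O(m_0^{-1})$, which is eventually negative. Hence $G$, and therefore the M/U$-$M/M gap, is positive at $m_0 = 1$ and eventually negative.

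The hard part will be certifying that $G$ changes sign only once so that $m_0^*$ is well-defined. My plan is to treat $G$ as a smooth function of the real variable $m_0 > 0$ and argue it has a unique positive zero. Because the perturbation $\coth(1/(2m_0)) - 2m_0 = 1/(6m_0) + O(m_0^{-3})$ is small and monotonically decreasing in $m_0$, $G$ closely tracks the cubic $p(m_0) = -4\rho m_0^3 + (6+24\rho)m_0^2 + 12\rho m_0 + 3\rho$. The derivative $p'(m_0) = -12\rho m_0^2 + 2(6+24\rho)m_0 + 12\rho$ has an obviously positive discriminant and a single positive root, so $p$ is unimodal on $(0,\infty)$ with exactly one positive zero. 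Transferring this single-crossing property to $G$ via a bound on the perturbation and its derivative (or, equivalently, by verifying that $G'$ is eventually negative and that $G$ has no secondary sign change on the initial increasing region) yields a unique real root $\tilde{m}_0$, and setting $m_0^* = \lfloor\tilde{m}_0\rfloor$ completes the argument.
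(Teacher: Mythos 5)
Your Stage 1 is complete and correct: after the common $\beta$-term cancels, the M/M-vs-M/D comparison does reduce to $(1+\alpha)/[m_0(1-\alpha)]>1$, which your elementary bound $(1-e^{-u})/u<1<1+e^{-u}$ settles, and the M/U-vs-M/D gap does clear to the positive polynomial $2\rho m_0^3+(6+24\rho)m_0^2+13\rho m_0+3\rho$. This part is, if anything, tidier than the paper, which handles both of these comparisons by a monotonicity-plus-limit argument.

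The gap is in Stage 2, and it sits exactly where the content of the theorem lies. You establish $G(1)>0$ and $G(m_0)<0$ for large $m_0$, which gives existence of a sign change but not uniqueness, and uniqueness is what makes $m_0^*$ a genuine threshold rather than merely one of possibly several crossings. Your plan to transfer the single-crossing property from the cubic $p$ to $G$ is only sketched ("via a bound on the perturbation and its derivative"), and it is not a routine step: the relevant root of $p$ is near $(6+24\rho)/(4\rho)$, which grows without bound as $\rho\to 0$, so the perturbation estimates would have to be made uniform over an unbounded range of $m_0$ and over all $\rho>0$; "positive at $m_0=1$" plus "eventually negative" does not by itself rule out additional wiggles of $G$. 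The paper closes exactly this gap by \emph{not} multiplying through by $12\mu m_0^3(1+\rho)$: it shows that the unnormalized difference $f_1(m_0)=\overline{\Delta}_{\text{M/U/1/1-D}}^{\rm blocking}-\overline{\Delta}_{\text{M/M/1/1-D}}^{\rm blocking}$ is itself monotonically decreasing in $m_0$, by substituting $t=-1/m_0\in(-1,0)$ and checking that $f_1'(t)$ decomposes into three terms, each manifestly negative (the exponential term via $e^t>1+t$). Monotonicity together with the endpoint signs then yields the unique crossing immediately. I would replace your perturbation plan with that computation; your $G$ equals $12\mu m_0^3(1+\rho)\,f_1$, and it is precisely the division by $m_0^3$ that turns the derivative into a sum whose sign can be read off term by term.
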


\begin{proof}
See Appendix \ref{app:thm6}.
\end{proof}

For the system with a length-1 blocking queue and deterministic decision intervals, \emph{Theorem} \ref{thm:6} shows that for any decision rate, the system with a deterministic service time makes most timely decisions.

Remarkably, by comparing the average AuDs in \emph{Corollary} \ref{cor:3} to \emph{Corollary} \ref{cor:1}, \emph{Theorem} \ref{thm:7} is also obtained for decision scheduling.
\begin{thm}\label{thm:7}
For the M/G/1/1-G bufferless systems when given the same service rate $\mu$ as well as decision rate $\nu$, it has
\begin{align}
\overline{\Delta}_{\text {M/U/1/1-D}}^{\rm{blocking}}\geq&\overline{\Delta}_{\text {M/U/1/1-M}}^{\rm{blocking}}, \\
\overline{\Delta}_{\text {M/M/1/1-D}}^{\rm{blocking}}\geq&\overline{\Delta}_{\text {M/M/1/1-M}}^{\rm{blocking}}, \\
\overline{\Delta}_{\text {M/D/1/1-D}}^{\rm{blocking}}\geq&\overline{\Delta}_{\text {M/D/1/1-M}}^{\rm{blocking}},
\end{align}
in which the equal sign holds as $m_0 \to \infty$.
\end{thm}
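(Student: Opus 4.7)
My plan is to verify the three inequalities individually by subtracting the M/G/1/1-M expression (from \emph{Corollary}~\ref{cor:1}, after rewriting $\lambda = \rho\mu$ so that every AuD lives on the common denominator $\mu(1+\rho)$) from the corresponding M/G/1/1-D expression in \emph{Corollary}~\ref{cor:3}. The strategy is to show that after the bulk "M-type" contributions cancel, the surviving difference is a sum of terms that are manifestly nonnegative, each one of which vanishes as $m_0\to\infty$.

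The computational heart of the argument is to recognize a single auxiliary function. Define
\begin{equation}
g(y) \;=\; \frac{y\,(1+e^{-y})}{2\,(1-e^{-y})},\qquad y>0,
\end{equation}
so that, because $\alpha=e^{-1/m_0}$ and $\beta=e^{-\rho/m_0}$, every "decision-interval correction'' in \emph{Corollary}~\ref{cor:3} can be recast as
\begin{equation}
\frac{\rho(1+\alpha)}{2\mu m_0(1+\rho)(1-\alpha)} \;=\; \frac{\rho\,g(1/m_0)}{\mu(1+\rho)},\qquad \frac{1+\beta}{2\mu m_0(1+\rho)(1-\beta)} \;=\; \frac{g(\rho/m_0)}{\rho\mu(1+\rho)}.
\end{equation}
A direct subtraction then yields, for each service distribution, a clean form: for the deterministic case the difference reduces to $\frac{g(\rho/m_0)-1}{\rho\mu(1+\rho)}$; for the exponential case it becomes $\frac{1}{\mu(1+\rho)}\bigl[\rho(g(1/m_0)-1)+\frac{1}{\rho}(g(\rho/m_0)-1)\bigr]$; and for the uniform case the same $g$-terms appear, with the additional $\frac{1}{2\mu m_0}$ and $\frac{\rho}{\mu(1+\rho)}(\tfrac{3}{2m_0}+\tfrac{13}{12m_0^2}+\tfrac{1}{4m_0^3})$ pieces being trivially nonnegative and vanishing as $m_0\to\infty$.

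With this reduction in hand, the proof comes down to a single inequality, $g(y)\ge 1$ for all $y>0$, with equality only in the limit $y\to 0^{+}$. I would prove this by rewriting the claim as $y\cosh(y/2)\ge 2\sinh(y/2)$, i.e.\ $y/2 \ge \tanh(y/2)$, which is the standard one-variable fact that $\tanh t\le t$ for $t\ge 0$ (immediate from $(\tanh t - t)' = -\tanh^2 t\le 0$ and equality at $t=0$). The equality case as $m_0\to\infty$ then follows because both $1/m_0$ and $\rho/m_0$ tend to $0$, and a second-order Taylor expansion gives $g(y)-1=\frac{y^2}{12}+O(y^4)$, which kills the surviving $g$-differences in the three expressions above.

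The step I expect to be the least mechanical is the uniform case: before the $g$-reduction is visible one has to split the cubic-in-$m_0$ third term of $\overline{\Delta}^{\rm blocking}_{\text{M/U/1/1-D}}$ into its leading $\frac{2\rho/3}{\mu(1+\rho)}$ contribution (which cancels the $\rho$-dependent piece of $\overline{\Delta}^{\rm blocking}_{\text{M/U/1/1-M}}$) plus $O(1/m_0)$ remainders, and also absorb the $\frac{1}{2\mu m_0}$ coming from the first term of \eqref{equ:MU11D}. Once the algebra is organized so that the only "potentially negative'' contributions are precisely the $g$-terms, the inequality and the $m_0\to\infty$ limit follow uniformly for all three service distributions.
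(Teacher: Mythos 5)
Your proposal is correct, and the algebra checks out: writing $\alpha=e^{-1/m_0}$, $\beta=e^{-\rho/m_0}$ and reducing the decision-interval corrections to $g(y)=\frac{y(1+e^{-y})}{2(1-e^{-y})}=\frac{y/2}{\tanh(y/2)}$, the three differences do collapse to $\frac{g(\rho/m_0)-1}{\rho\mu(1+\rho)}$ (deterministic), $\frac{1}{\mu(1+\rho)}\bigl[\rho(g(1/m_0)-1)+\tfrac{1}{\rho}(g(\rho/m_0)-1)\bigr]$ (exponential), and the same $\beta$-term plus the manifestly nonnegative $O(1/m_0)$ polynomial remainders (uniform), so everything rests on $\tanh t\le t$. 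This is a genuinely different route from the paper's. The paper works with the same quantity in the form $f(m_0)=\frac{1+\alpha}{m_0(1-\alpha)}=2g(1/m_0)$, but instead of bounding it below by $2$ directly, it differentiates with respect to $t=1/m_0$ and uses $e^t\ge 1+t$ to show $f$ is monotonically decreasing in $m_0$; it then computes the limits $f\to 2$ and $\frac{1+\beta}{m_0(1-\beta)}\to 2/\rho$ and identifies the limiting D-expressions with the M-expressions of \emph{Corollary}~\ref{cor:1}, so that each D-AuD sits above its own limit. Your direct subtraction buys a cleaner and more self-contained verification of the inequality and of the equality case (via $g(y)-1=y^2/12+O(y^4)$), and it makes explicit the algebraic cancellation that the paper only asserts in its final sentence. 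What the paper's route buys, and yours does not, is the monotonicity of $\overline{\Delta}^{\rm blocking}_{\text{M/G/1/1-D}}$ in $m_0$ itself, which the paper also relies on elsewhere (the claim that the periodic-decision AuD decreases as the decision rate grows); if you wanted that side conclusion you would still need the paper's derivative argument or an equivalent statement that $g(y)$ is increasing in $y$.
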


\begin{proof}
See Appendix \ref{app:thm7}.
\end{proof}

From \emph{Theorem} \ref{thm:7}, one can conclude that for the M/G/1/1-G updating system with a bufferless queue, the memoryless random decision interval achieves better timeliness than the deterministic decision interval when given the same service time distribution. Therefore, we prefer random decisions to periodic decisions when scheduling decision intervals if the main concern is timeliness of received updates.

Next, in order to evaluate the system efficiency or, equivalently, the utilization of successful updates, we have the following corollary about the missing probability in the M/G/1/1-D bufferless system.
\begin{cor}\label{cor:4}
For the M/U/1/1-D, M/M/1/1-D and M/D/1/1-D bufferless systems with deterministic decision intervals, missing probabilities can be given by
\begin{align}
p_{\rm {mis}}^{\text {M/U/1/1-D}}&=\frac{1}{4m_0}+\frac {{m_{0} \left ({1-\beta}\right)-\rho }}{2\rho ^{2}}, \\
p_{\rm {mis}}^{\text {M/M/1/1-D}}&=\frac{m_0 \left ( \alpha- \beta \right )}{\rho \left( \rho-1 \right)}
+\frac{\left ( \rho-m_0- \rho m_0 \right ) \left (1-\alpha \right )}{\rho}+{\alpha},
\\ p_{\rm {mis}}^{\text {M/D/1/1-D}} &= 0.
\end{align}
\end{cor}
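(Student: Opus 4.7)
The plan is to exploit the same reduction used in the proof of Theorem \ref{thm:3}: the missing probability equals $\Pr(N_k = 0)$, the probability that no decision epoch falls inside the inter-departure interval $Y_k$. For the bufferless M/G/1/1 queue with Poisson arrivals, the memoryless property gives $Y = X + S$ in distribution, with $X \sim \mathrm{Exp}(\lambda)$ independent of the service time $S$, so only the density $f_Y = f_X \ast f_S$ changes across the three service laws.

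With deterministic decisions spaced $1/\nu$ apart, whether a given interval of length $y$ contains a decision is determined by the phase of that interval within the decision cycle. In the same spirit as the uniform-within-$Y_k$ approximation of Theorem \ref{thm:5}, I would treat this phase as independent of $Y$ and uniform on $[0,1/\nu]$, which gives
\begin{align*}
\Pr(N_k = 0 \mid Y = y) = \max\bigl(0,\, 1 - \nu y\bigr).
\end{align*}
Taking expectation over $Y$ yields the unified starting point
\begin{align*}
p_{\mathrm{mis}} = \int_{0}^{1/\nu}(1 - \nu y)\, f_Y(y)\, dy.
\end{align*}

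The three cases then reduce to direct integrations. In the M/D/1/1-D system, $S = 1/\mu$ deterministic forces $Y \geq 1/\mu = m_0/\nu \geq 1/\nu$ (since $m_0 \geq 1$), so the integration region is empty and $p_{\mathrm{mis}}^{\text{M/D/1/1-D}} = 0$. In the M/U/1/1-D system, $S \sim \mathrm{Unif}(0,2/\mu)$ gives a piecewise convolution, but because $m_0 \geq 1$ forces $1/\nu \leq 1/\mu < 2/\mu$, the integration stays inside the first piece $f_Y(y) = \tfrac{\mu}{2}(1 - e^{-\lambda y})$; splitting the integrand into its polynomial and exponential parts and using the standard $\int_0^{1/\nu} y\, e^{-\lambda y}\, dy$ yields the $1/(4m_0)$ term plus the $(m_0(1-\beta)-\rho)/(2\rho^2)$ term. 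In the M/M/1/1-D system, $Y$ has the hypoexponential density $\lambda\mu(e^{-\lambda y} - e^{-\mu y})/(\mu - \lambda)$; substituting into the integral and collecting coefficients of $1$, $\alpha$, and $\beta$ using $\rho = \lambda/\mu$ and $m_0 = \nu/\mu$ yields the stated expression.

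The main obstacle is the algebraic rearrangement in the M/M/1/1-D case: a direct evaluation naturally produces the compact form $1 - m_0(1-\beta)/[\rho(1-\rho)] + \rho m_0(1-\alpha)/(1-\rho)$, and massaging this into the asymmetric three-term form claimed by the corollary requires careful sign bookkeeping, in particular splitting the constant $1$ as $\alpha + (1-\alpha)$ and regrouping the $(1-\alpha)$ coefficient as $(\rho - m_0 - \rho m_0)/\rho$. Once these identifications are made, the three pieces of the stated expression match term-by-term.
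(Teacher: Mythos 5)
Your proposal is correct and follows essentially the same route as the paper: the paper likewise reduces the missing probability to $\Pr\{Y_k + \theta < 1/\nu\}$ with a uniform backward-recurrence phase $\theta$ on $[0,1/\nu]$, which is exactly your $\int_0^{1/\nu}(1-\nu y)f_Y(y)\,dy$ written as a triple integral over the service time, the residual inter-arrival time, and $\theta$. Your case-by-case evaluations, including the rearrangement of the compact M/M/1/1-D expression into the corollary's three-term form, check out.
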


\begin{proof}
See Appendix \ref{app:cor4}.
\end{proof}

Therefore, it is clear that the deterministic service time achieves better performance than other common service time distributions with regard to missing probability, based on which and \emph{Theorem} \ref{thm:6}, one can conclude that deterministic service time is also the most suitable service statistics for update-and-decision M/G/1/1-D bufferless system.

\section{Simulation Results} \label{sec:simula}
\begin{figure}[!t]
%\centering
\includegraphics[width=1\linewidth]{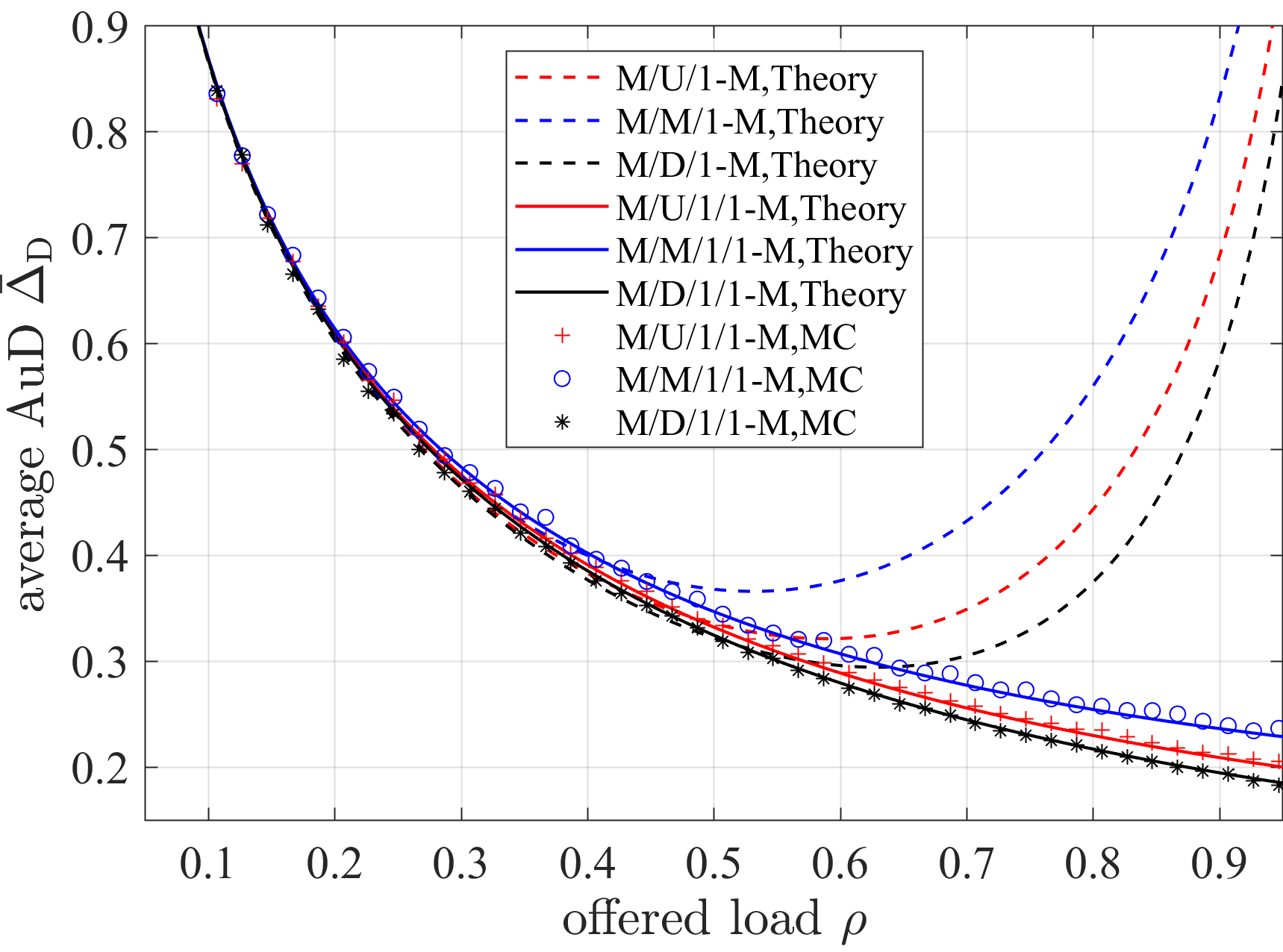}
\caption{Comparison between the average AuDs of update-and-decision M/G/1-M and M/G/1/1-M systems when setting $\mu = 1.5$.}
\label{fig:1}
\end{figure}

In this part, we first demonstrate the performance of the M/G/1/1-M and M/G/1/1-D bufferless systems by presenting numerical results, which verifies our analyses on service scheduling. Then, we compare them to the corresponding systems with infinite buffer size to gain insight on the effect of buffer existence on system performance. Also, we shall find the optimal decision interval for length-1 blocking queue by making an average AuDs comparison between Poisson decisions and periodic decisions. Finally, we make a comprehensive comparison of system performance for above systems with and without buffer. Monte Carlo simulation results, which are marked with the abbreviation MC in the figures, are also provided as a validation check and exactly match with theoretical results.

\begin{figure}[!t]
\centering
\includegraphics[width=1\linewidth]{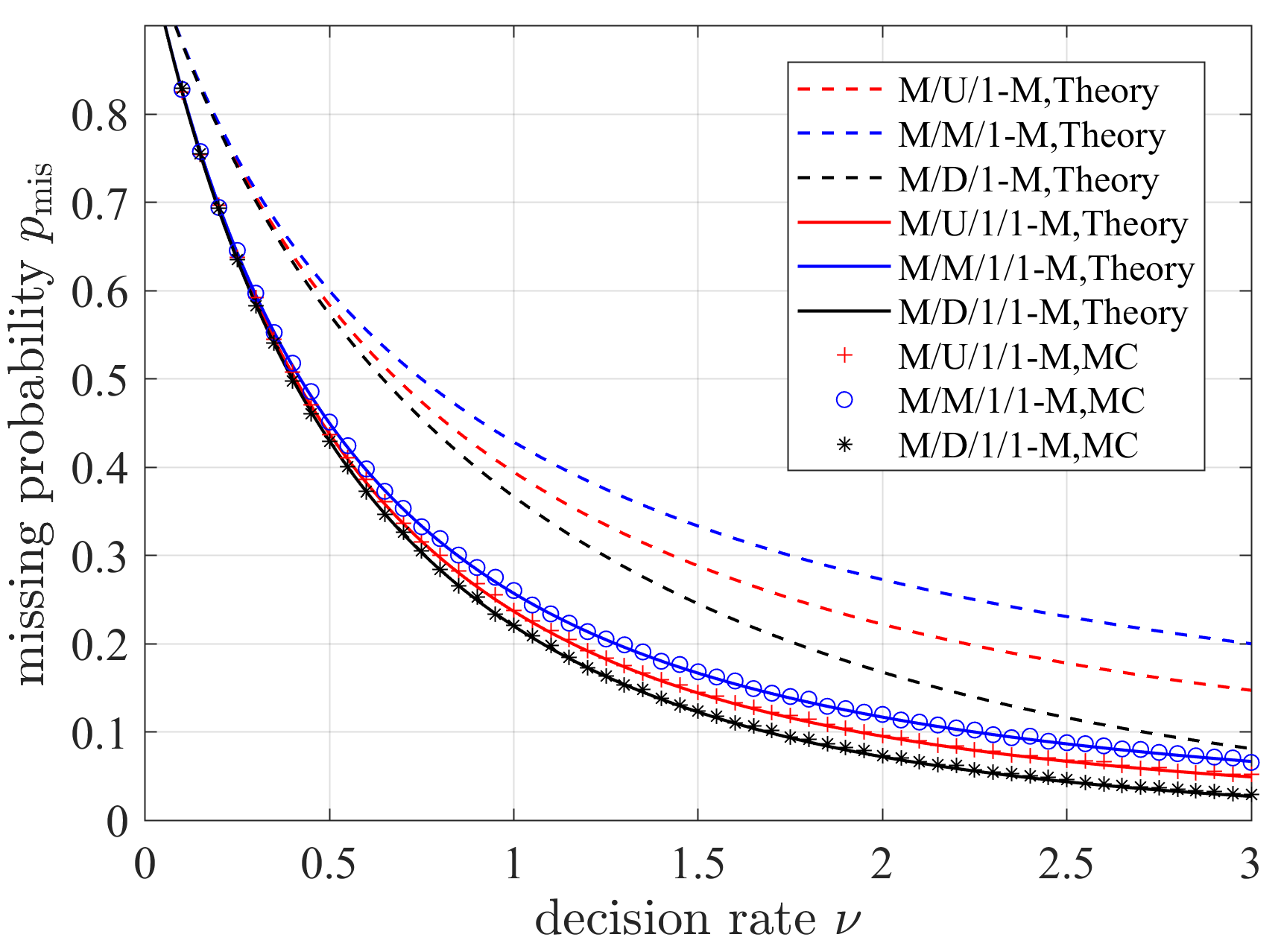}
\caption{Comparison between the missing probabilities of update-and-decision M/G/1-M and M/G/1/1-M systems when setting $\rho = 0.5$.}
\label{fig:2}
\end{figure}

Fig. \ref{fig:1} plots the average AuDs of M/U/1/1-M, M/M/1/1-M and M/D/1/1-M bufferless systems under different system loads $\rho$, where service rate is fixed to 1.5. First, it is observed that the average AuDs drop with the offered load $\rho$ and are minimized when the offered load is relatively high. Second, it can be seen that the deterministic service time achieves the lowest AuD, which is in agreement with the results in \emph{Corollary} \ref{cor:1}. Also, compared with systems employing infinite buffer size whose AuDs are indicated by dotted lines, the bufferless systems improve the average AuD when $\rho > 0.5$ and significantly reduce it under heavy offered load. This is because for the FCFS infinite queue, heavy system load can lead to a longer waiting time, which, however, can be ignored in length-1 blocking queue.

\begin{figure}[!t]
\centering
\includegraphics[width=0.98\linewidth]{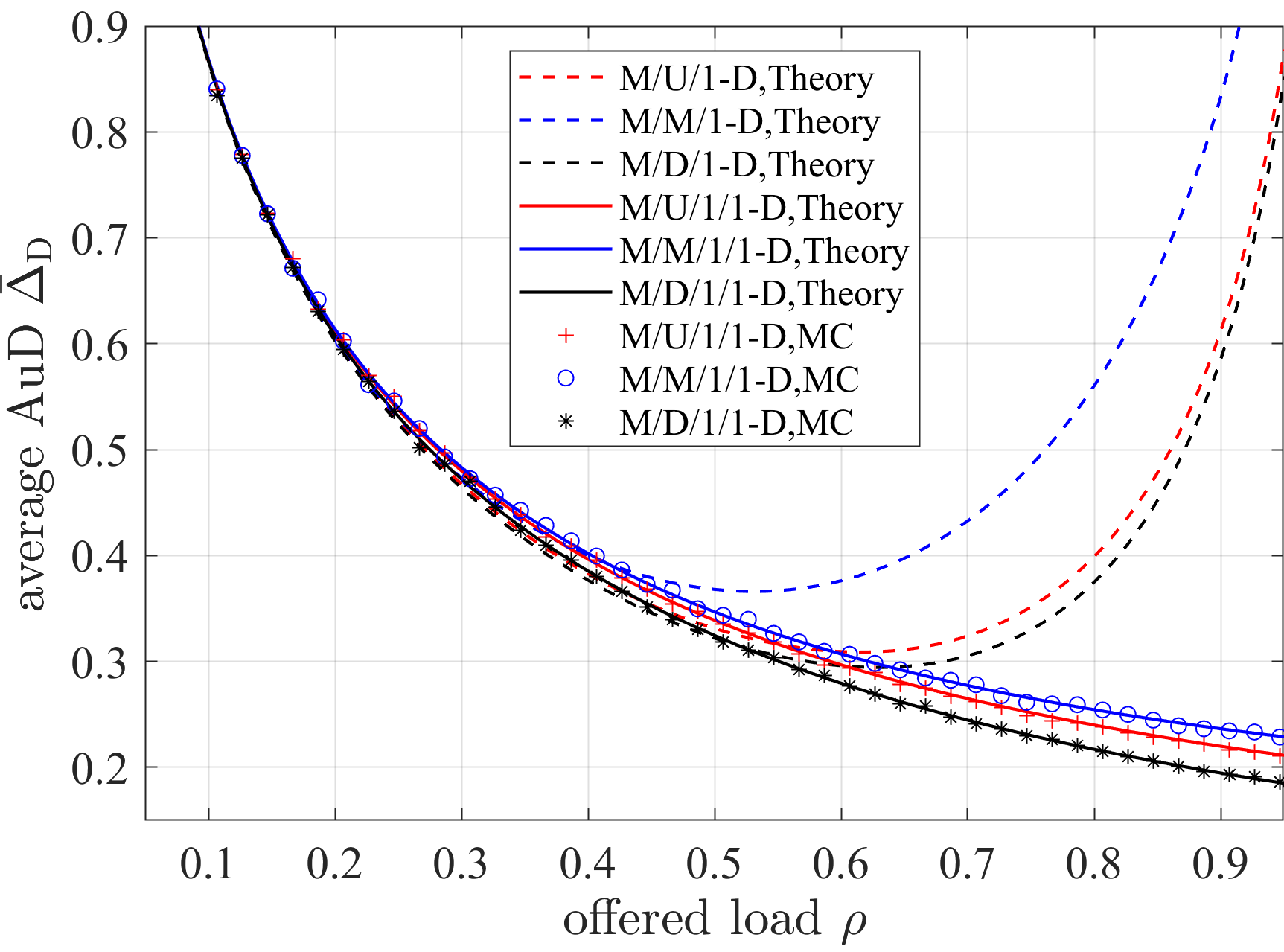}
\caption{Average AuD comparison between M/G/1-D and M/G/1/1-D update-and-decision systems when settings $\mu = 1.5$ and $m_0 = 30$.}
\label{fig:3}
\end{figure}

\begin{figure}[!t]
\centering
\includegraphics[width=1\linewidth]{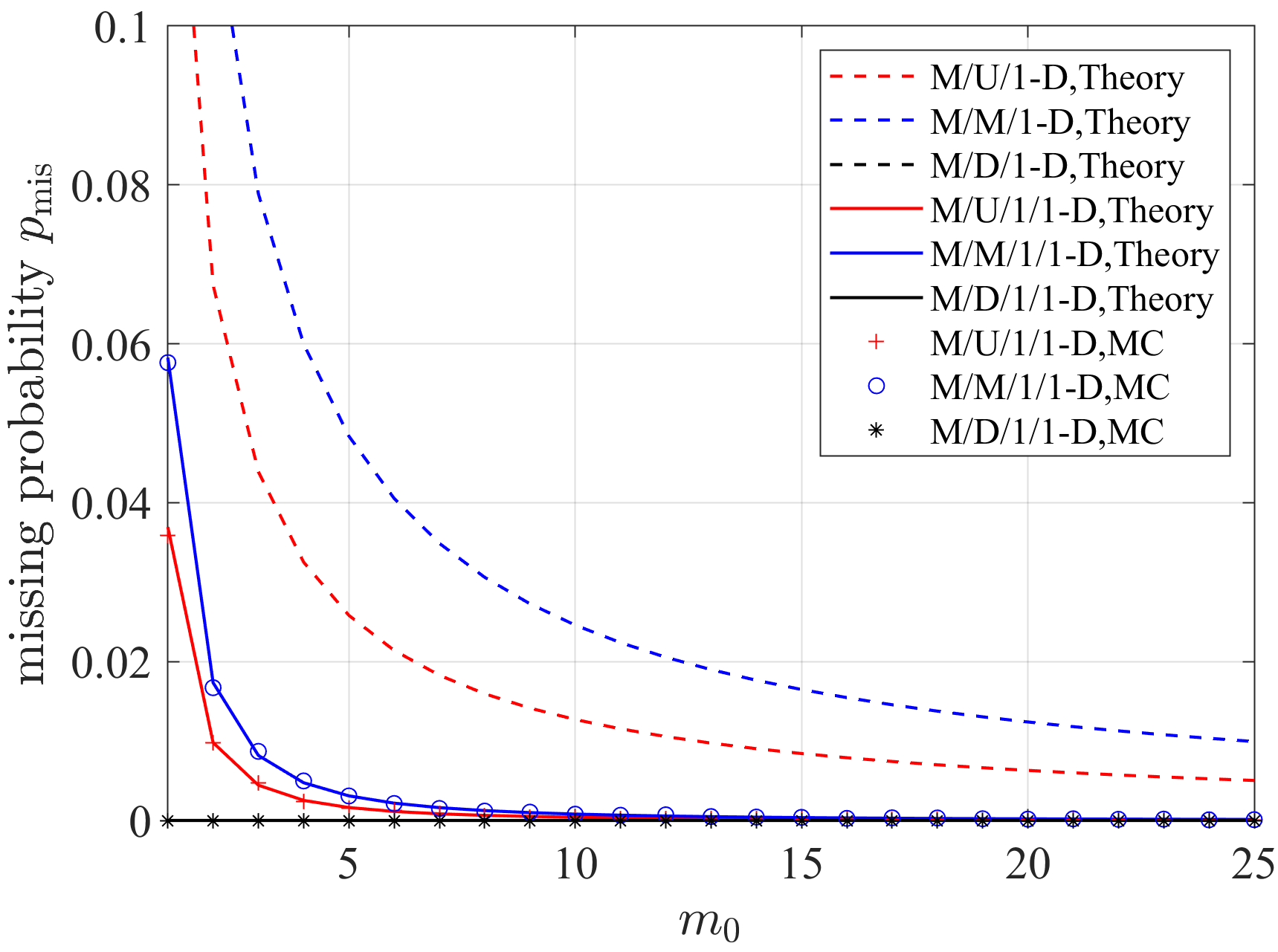}
\caption{Missing probability comparison between M/G/1-D and M/G/1/1-D update-and-decision systems when setting $\rho = 0.5$. }
\label{fig:4}
\end{figure}

Fig. \ref{fig:2} depicts the relation between the missing probability and the increasing decision rate $\nu$ for common offered load $\rho = 0.5$. First, one can see that the missing probability of bufferless system drops as $\nu$ increases, which means that a higher decision rate can contribute to a more effective system. In addition, it can be observed that the system under deterministic service times achieves the lowest missing probability, which indicates that scheduling a deterministic service time is also the best strategy to use when it comes to missing probability. Further, it is also illustrated that the bufferless system has a lower missing probability than corresponding system with infinite buffer size, which coincides with our analyses in \emph{Theorem} \ref{thm:100}.
By combining the results in Fig. \ref{fig:1} and Fig. \ref{fig:2}, it can be deduced that the M/G/1/1-M bufferless  system performs best when the service time is deterministic rather than random.

Fig. \ref{fig:3} illustrates the average AuDs of bufferless systems with periodic decisions with respect to offered load $\rho$. It is observed that the Monte Carlo simulations match our theoretical results well, which shows that our approximation in the calculation of the average AuD is viable and only leads to a trivial deviation. Also, similar to the analysis on Fig. \ref{fig:1}, one can check that the average AuDs decrease with $\rho$ and are far smaller than those of systems with infinite buffer size. We also notice that the average AuD of system scheduling a deterministic service time is lower than that of system applying other service time distributions.

\begin{figure}
\centering
\subfigure[Average AuDs of bufferless systems under different assumptions on service time and decision interval.]{
\includegraphics[width=1\linewidth]{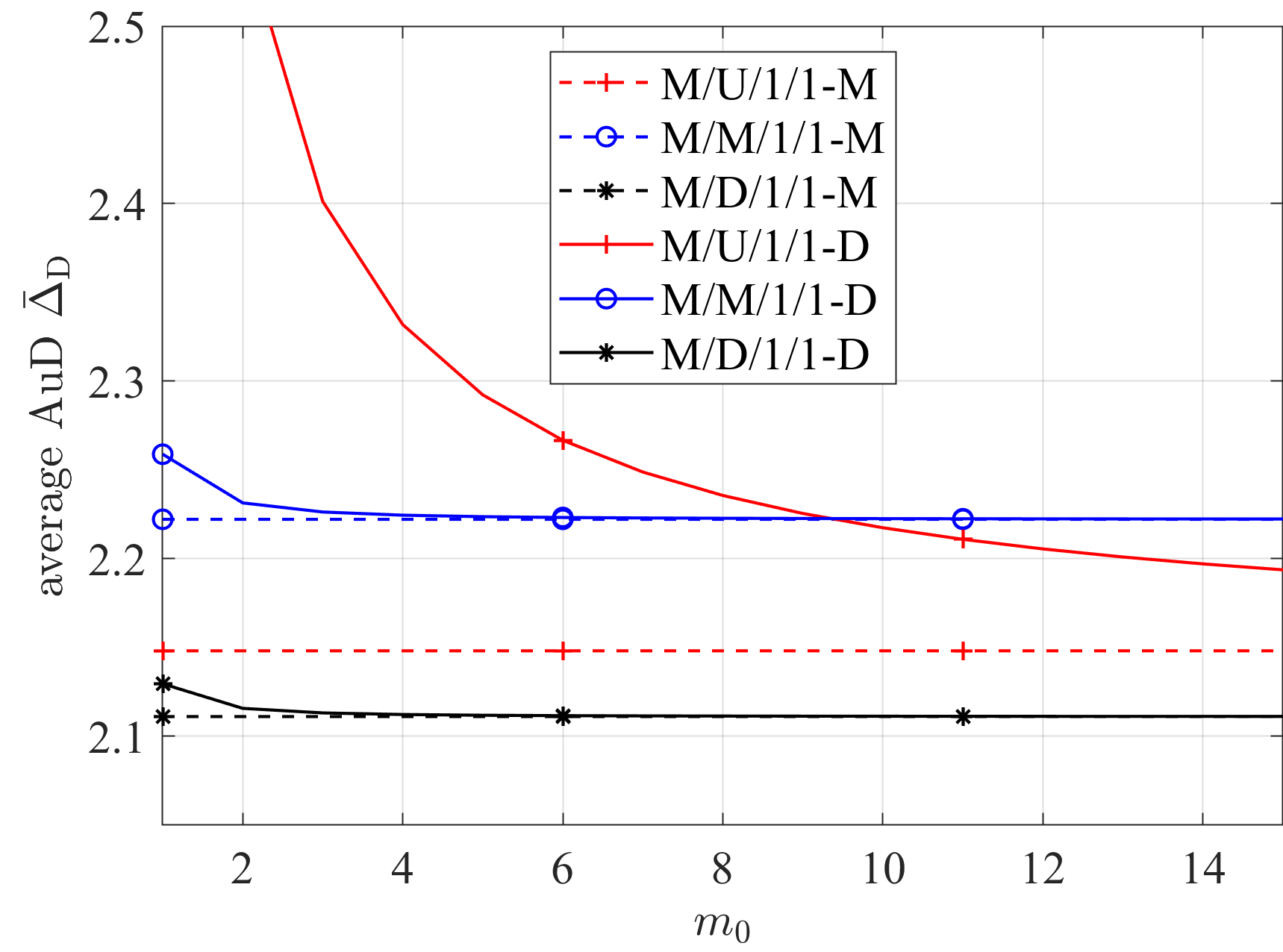}
\label{fig:5}
}
\subfigure[Missing probabilities of bufferless systems under different assumptions on service time and decision interval.]{
\includegraphics[width=1\linewidth]{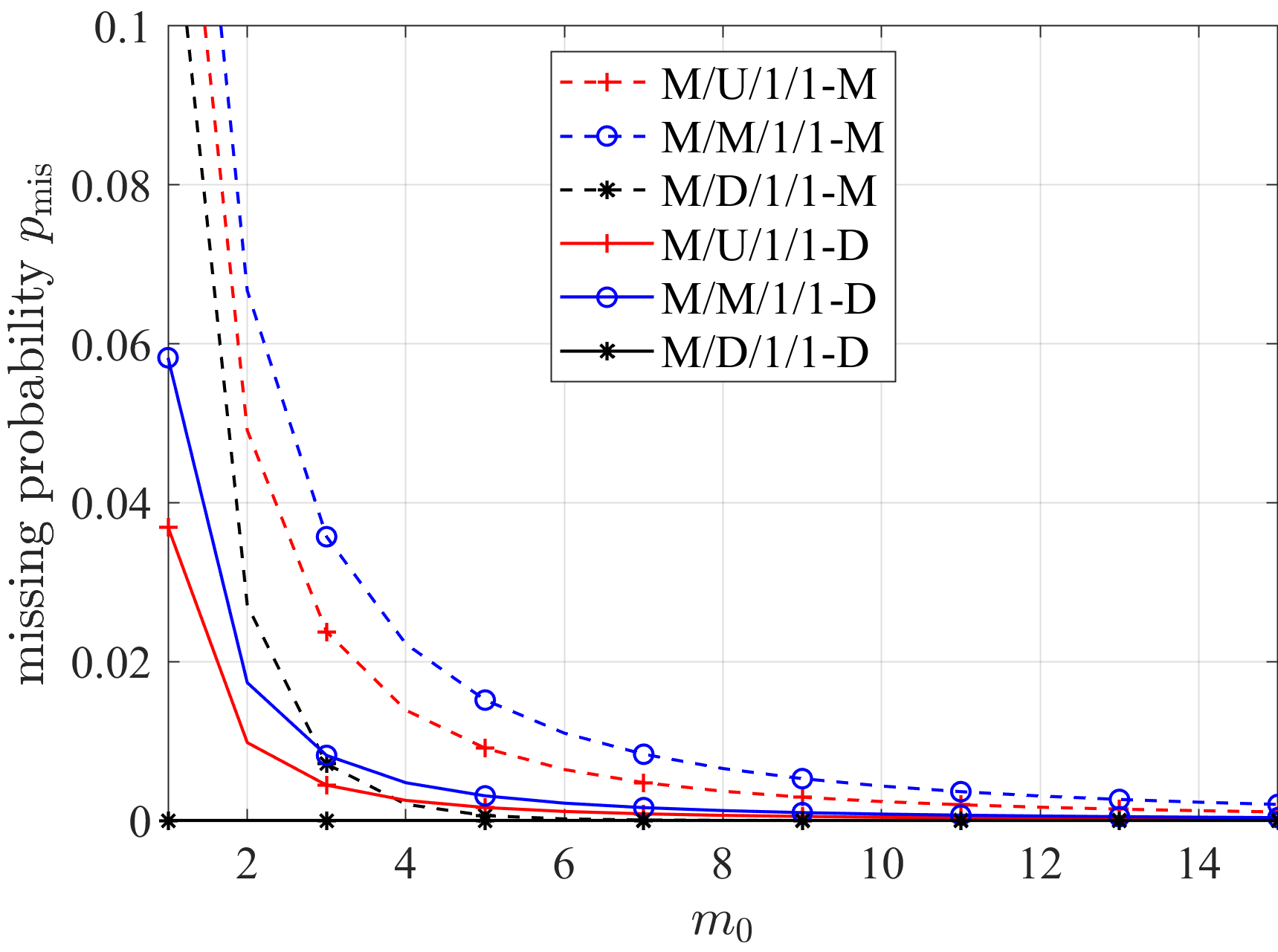}
\label{fig:6}
}
\DeclareGraphicsExtensions.
\caption{Performance comparison between systems with different service times and decision intervals.}
%\label{V9-1-vary-S}
\label{fig:decisions}
\end{figure}

Fig. \ref{fig:4} depicts the missing probabilities of bufferless systems with periodic decisions with respect to decision rate $\nu$. Note that the dashed black line is overlapped by the solid black line, which shows the missing probabilities of M/D/1-D and M/D/1/1-D systems are both equal to 0 and every received update is used for at least one decision. One can also see that the bufferless system achieves a lower missing probability than system with infinite buffer size for all decision rates. This can be explained by considering that the latter has a smaller expectation of inter-departure time $\mathbb E[Y]$, which increases the probability that decision interval is larger than inter-departure time. In addition, one can also observe that the missing probability of system applying a deterministic service time is smaller than that of system applying other service time distributions.

\begin{figure}
\centering
\subfigure[3D stacked bar chart for the average AuD. Note that when the service time is exponentially distributed or periodic, the AuD with exponential decision intervals is slightly larger than the AuD with periodic decision intervals.]
{
\includegraphics[width=1\linewidth]{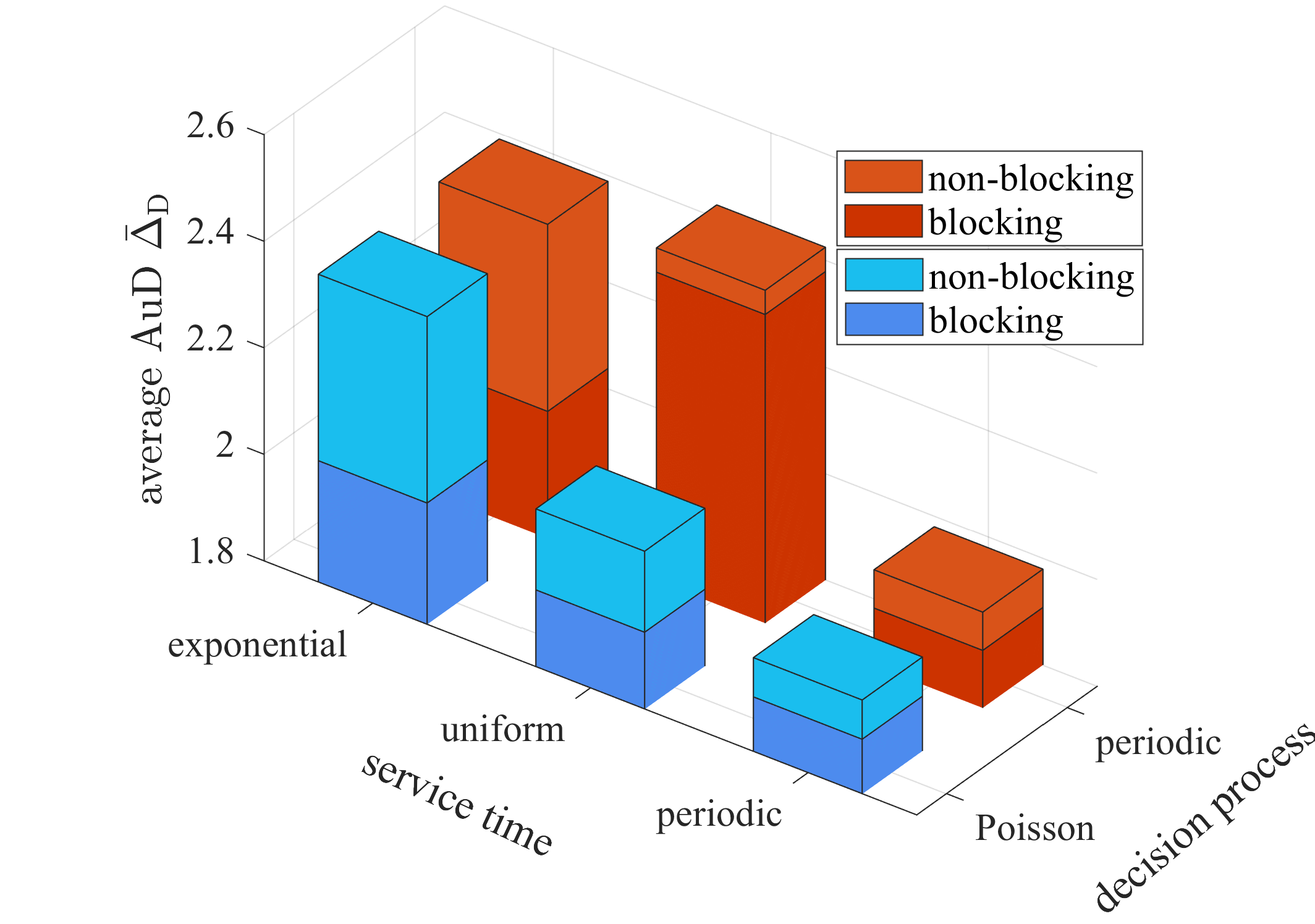}
\label{fig:bar_AuD}
}
\subfigure[3D stacked bar chart for the missing probability.]{
\includegraphics[width=1\linewidth]{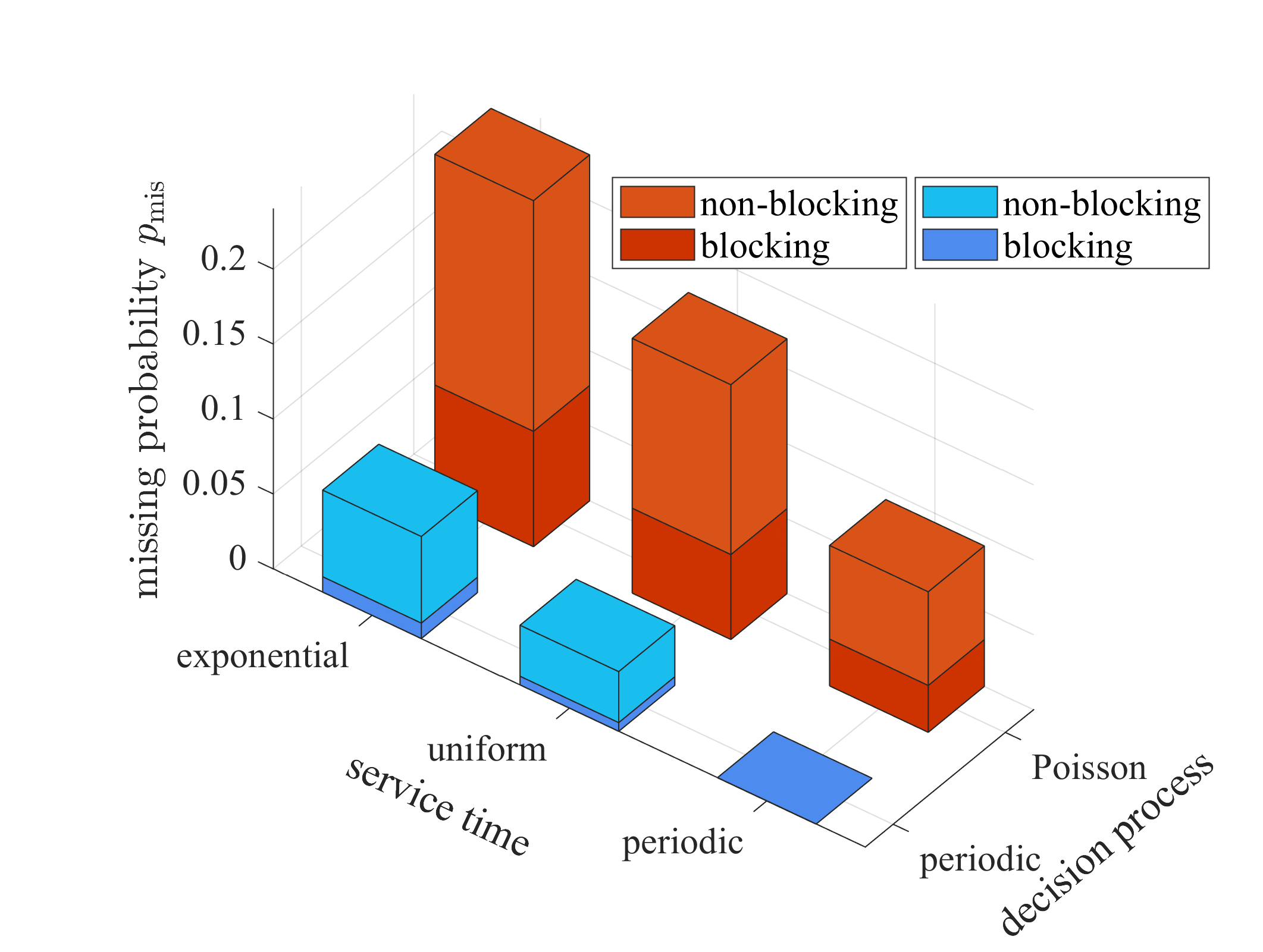}
\label{fig:bar_mis}
}
\DeclareGraphicsExtensions.
\caption{System performance for update-and-decision systems, where $\rho = 0.6$.}
%\label{V9-1-vary-S}
\label{fig:bar}
\end{figure}

By noticing the similarity between Fig. \ref{fig:1}, Fig. \ref{fig:2} and Fig. \ref{fig:3}, Fig. \ref{fig:4}, it can be found that, in terms of buffer existence, the length-1 blocking queue improves both timeliness and efficiency of the updating systems. Further, among the three common service time distributions, the deterministic service time performs best since it can reduce the average AuD as well as the missing probability at the same time.

In Fig. \ref{fig:decisions}, we analyze the performance difference between update-and-decision bufferless systems with Poisson decisions and periodic decisions, where the offered load $\rho = 0.5$. From Fig. \ref{fig:5}, one can see that the average AuDs of systems utilizing periodic decisions decrease with decision rate $\nu$ and drop to those of systems with Poisson decisions as $\nu$ goes to infinity, which verifies \emph{Theorem} \ref{thm:5}. Moreover, one can observe that Poisson decision process remarkably reduces the average AuD, especially when decision rate is small. Fig. \ref{fig:6} shows how the missing probability changes with rate of decisions, where one can find that periodic decisions significantly reduce the missing probability compared with Poisson decisions. Thus, exponential decision interval is more suitable for improving timeliness while periodic decision interval is recommended to boost the efficiency.

Finally, by combining the results on systems with infinite buffer size in\cite{9184001}, we comprehensively compared the average AuDs and missing probabilities of updating systems under different assumptions of buffer existence, service distribution and decision interval in Fig. \ref{fig:bar}. It is intuitively clear that, to satisfy the need of better system performance, the length-1 blocking discipline and deterministic service time are preferred while whether to utilize deterministic or exponential decision intervals depends on the preference between improving timeliness and boosting efficiency.

%��ȥʽ
\section{Conclusion} \label{sec:conclu}
We analyzed both the timeliness and efficiency for the update-and-decision bufferless system based on IoT with Poisson arrivals, in which we focused on how the buffer existence, service time distribution, and decision interval influence system performance. In particular, we considered a length-1 blocking queue where there is no waiting space for incoming updates. For systems with Poisson decisions or periodic decisions, we respectively calculated the average AuD and missing probability under general service time distributions. It is shown that the average AuD would drop with the offered load and achieve its minimum if the offered load is relatively high, and missing probability would be large when the decision rate is relatively small and decline with decision rate.
From the perspective of buffer existence, the bufferless queue effectively improves the system timeliness and efficiency.
Also, in terms of service scheduling, it is found that deterministic service time results in better performance compared with other service time distributions.
As for decision interval, it is found that the Poisson decision system has a lower average AuD but a larger missing probability than periodical decision system.

%future work

%��¼
\appendix
\subsection{Proof of Theorem \ref{thm:1}}\label{app:thm1}
According to \cite{8887253}, the average AuD of the update-and-decision G/G/1/1-M system is
\begin{align} \label{equ:GG11M1}
\overline{\Delta}_{\text {M/G/1/1-M}}^{\rm{blocking}}=\frac {\mathbb {E}\left [{Y_{k}^{2}}\right]+ 2\mathbb {E}\left [{T_{k-1}Y_{k}}\right]}{2\mathbb {E}\left [{Y_{k}}\right]}.
\end{align}
Since our interest is the steady-state behavior of system, the time indices are dropped in this case. Therefore, (\ref{equ:GG11M1}) becomes at steady state and can be rewritten as
\begin{align} \label{equ:GG11M2}
\overline{\Delta}_{\text {M/G/1/1-M}}^{\rm{blocking}}=\frac {\mathbb {E}\left [{Y^{2}}\right]+ 2\mathbb {E}\left [{T}\right]\mathbb {E}\left [{Y}\right]}{2\mathbb {E}\left [{Y}\right]},
\end{align}
where the system time of the ${(k-1)}^{th}$ successful update $T_{k-1}$ and the inter-departure time of the $k^{th}$ successful update $Y_{k}$ are independent with each other.

For the M/G/1/1 blocking system, \cite{8006504} shows that the mean inter-departure time $Y$ is
\begin{align} \label{equ:A1}
\mathbb {E}\left [{Y}\right] = \mathbb {E}\left [{X}\right] + \mathbb {E}\left [{S}\right] ,
\end{align}
and
\begin{align}  \label{equ:A2}
Y = Z + S,
\end{align}
where $Z$ is the residual inter-arrival time until a new update is generated and is exponentially distributed.

Recall that the mean system time is equal to the mean service time, i.e., $\mathbb {E}[T] = \mathbb {E}[S]$. Since the mean inter-arrival time $\mathbb {E}\left [{X}\right] = 1/\lambda$ and mean service time $\mathbb {E}\left [{S}\right] = 1/\mu$, by substituting (\ref{equ:A1}) and (\ref{equ:A2}) into (\ref{equ:GG11M2}), the average AuD of M/G/1/1-M system employing a length-1 blocking queue with random decisions is obtained as
\begin{align}
\overline{\Delta}_{D}=&\frac {\mathbb {E}\left [{Y^{2}}\right]+ 2\mathbb {E}\left [{T}\right]\mathbb {E}\left [{Y}\right]}{2\mathbb {E}\left [{Y}\right]}\nonumber \\
=& \frac {\mathbb {E}\left [{\left({X+S} \right)^{2}}\right]}{2\left ( \mathbb {E}\left [{X}\right] + \mathbb {E}\left [{S}\right] \right )} + \mathbb {E}\left [{S}\right]\nonumber \\
\label{equ:A3}=& \frac {2\mathbb {E}\left [{X}\right]^2 + 2\mathbb {E}\left [{X}\right]\mathbb {E}\left [{S}\right] + \mathbb {E}\left [{S^{2}}\right]}{2 \left (\mathbb {E}\left [{X}\right] + \mathbb {E}\left [{S}\right]\right )} + \mathbb {E}\left [{S}\right] \\
=& \frac {\lambda \mu \left (\mathbb {E}\left [{S^{2}}\right] \right )}{2 \left ( {\lambda+\mu} \right)} + \frac{\lambda+\mu}{\lambda \mu},
\end{align}
where (\ref{equ:A3}) is obtained since the inter-arrival time $X$ is exponentially distributed with $\mathbb {E}\left [{X^{2}}\right] = 2\mathbb {E}\left [{X}\right]^{2}$.

%Proof of \emph{Theorem} \ref{thm:1} is then completed.

%\subsection{Proof of Theorem \ref{thm:3}}\label{app:thm3}
%By taking the expectation over inter-departure time $Y$, we have
%\begin{align}
%p_{\rm {mis}}^{\text {M/G/1/1-M}} &= \mathbb {E}\left [{e^{-\nu Y}}\right]\nonumber \\
%&= \mathbb {E}\left [{e^{-\nu \left( X + S \right)}}\right]\nonumber \\
%&= \mathbb {E}\left [{e^{-\nu  X }}\right] \mathbb {E}\left [{e^{-\nu S }}\right]\nonumber \\
%&= G_{X}\left ({-\nu }\right)G_{S}\left ({-\nu }\right).
%\end{align}

%Proof of \emph{Theorem} \ref{thm:3} is then completed.

\subsection{Proof of Theorem \ref{thm:101}}\label{app:thm101}
Firstly, we compare the missing probabilities between the M/U/1/1-M and the M/M/1/1-M bufferless systems with the same Poisson decision process. By setting $t = \nu/\mu > 0$, the difference between them is
\begin{align}
f_{1}\left ({t}\right) &= p_{\rm {mis}}^{\text {M/M/1/1-M}} - p_{\rm {mis}}^{\text {M/U/1/1-M}} \nonumber \\
&= \frac{\lambda}{\lambda+\nu} \left( \frac{1}{1+t} - \frac{1-e^{-2t}}{2t} \right).
\end{align}
Let us denote
\begin{align}
f_{2}\left ({t}\right) &= \frac{2t}{1+t}+e^{-2t}-1,
\end{align}
and its first derivative with respect to $t$
\begin{align}
f_{2}^{\prime}\left ({t}\right) = \frac{2\left( e^{2t}-\left( 1+t \right)^2 \right)}{e^{2t}\left( 1+t \right)^2}.
\end{align}
Let us further denote that $g(t)=e^{2t}-(1+t)^2$ and $g^{\prime}(t)=2(e^{2t}-1-t)$. Noticing that $e^{2t} > 2t+1$, one can check that $g^{\prime}(t) > 0$, which means $g(t)$ monotonically increases with $t$ and $g(t)> g(0) = 0$. Therefore, $f_{2}^{\prime}(t)> 0$ and $f_{2}(t)$ also monotonically increases as $t$ increases with a minimum $f_{2}(t)> f(0) = 0$. Since $G_{X}({-\nu }) = \lambda/(\lambda + \nu)$ is positive and $f_2(t) > 0$, it can already be proved that $f_{1}\left ({t}\right) > 0$ and $p_{\rm {mis}}^{\text {M/U/1/1-M}} < p_{\rm {mis}}^{\text {M/M/1/1-M}}$ for all $t>0$.

Secondly, we make the missing probability comparison between the M/U/1/1-M and the M/D/1/1-M bufferless systems with the same Poisson decision process. Let us denote
\begin{align}
f_{3}\left ({t}\right)= \frac{p_{\rm {mis}}^{\text {M/U/1/1-M}}}{p_{\rm {mis}}^{\text {M/D/1/1-M}}}
= \frac{e^t-e^{-t}}{2t},
\end{align}
and
\begin{align}
f_{3}^{\prime}\left ({t}\right) = \frac{e^t\left( t-1 \right)+e^{-t}\left( t+1 \right)}{2t^2}.
\end{align}
Further, let us denote $h(t) = e^t( t-1 )+e^{-t}( t+1)$ and $h^{\prime}(t)=t(e^t-e^{-t})$. Since $t>0$, it is clear that $h^{\prime}(t)>0$, which means that $h(t)$ increases monotonically with $t$ and $h(t)> h(0) = 0$. Hence, $f_{3}^{\prime} ({t})$ is also positive and $f_{3}(t)> f_{3}(0) = 1$, which proves that $p_{\rm {mis}}^{\text {M/D/1/1-M}} < p_{\rm {mis}}^{\text {M/U/1/1-M}}$ when $t>0$.

%Proof of \emph{Theorem} \ref{thm:101} is then completed.

\subsection{Proof of Theorem \ref{thm:5}}\label{app:thm5}
\begin{figure}[!t]
\centering
\includegraphics[width=1\linewidth]{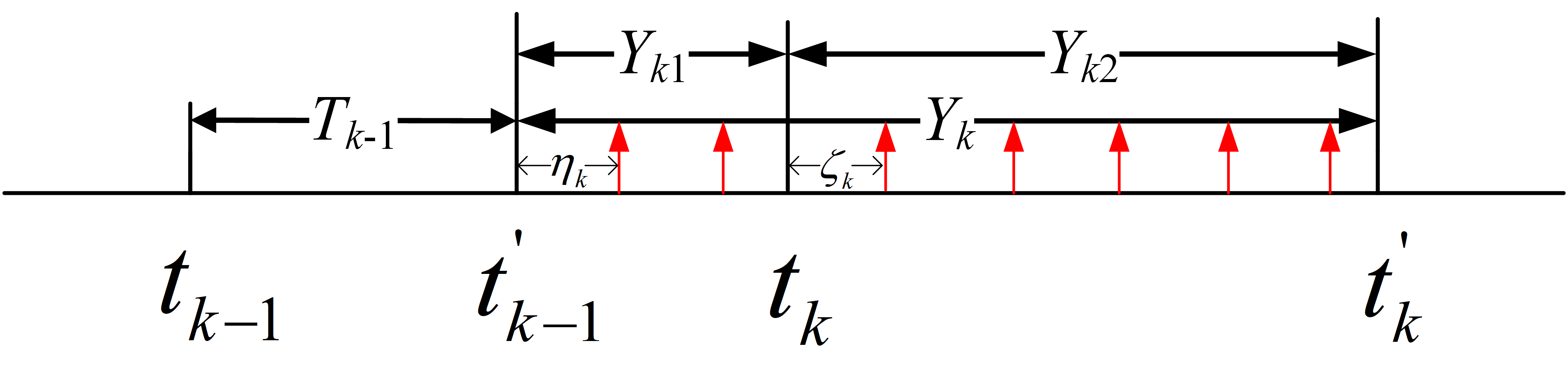}
\caption{System time and inter-departure time for bufferless systems.}
\label{fig:blocking_AuD}
\end{figure}
As shown in Fig. \ref{fig:blocking_AuD}, the time elapsed from the departure moment of the ${(k-1)}^{th}$ successful update to the arrival moment of the $k^{th}$ successful update is denoted as $Y_{k1}$, i.e., $Y_{k1} = t_k-t_{k-1}^{\\'}$. Moreover, we denote the system time of the $k^{th}$ successful update as $Y_{k2}$, i.e., $Y_{k2} = t_{k}^{\\'}-t_{k}$, which is equivalent to its service time $S_k$. In addition, we denote $\eta_k = \tau_{k_1} - t_{k-1}^{\\'}$, which is the period from the departure epoch of the ${(k-1)}^{th}$ successful update to the first decision epoch during $Y_k$. By considering the fact that the decision epochs still follow a uniform distribution in the period $Y_k$, $\eta_k$ is therefore uniformly distributed with parameter $\nu$ over $[0, 1/\nu]$. Thus, the age upon any decision during $Y_k$ is
\begin{align}
\Delta _{\rm D}\left ({\tau _{k_{j}}}\right)=T_{k-1}+\eta_k+\frac {j-1}{\nu },~~j=1, 2,\ldots, N_{k}^{(1)}+N_{k}^{(2)},
\end{align}
in which $N_{k}^{(1)}$ and $N_{k}^{(2)}$ are respectively the numbers of decisions made in the period $Y_{k1}$ and $Y_{k2}$.

Adding up the AuDs of all decisions, the average sum AuD in the period $Y_k$ is given by
\begin{align} \label{equ:D1}
& \mathbb {E}   \left [  { \Delta _{{\rm D}k}}\right] \nonumber \\
&=\mathbb {E}\left [{\Sigma _{j=1}^{N_{k}^{(1)}+N_{k}^{(2)}}\Delta _{\rm D} \left ({\tau _{k_{j}}}\right)}\right] \nonumber \\
 &=\mathbb {E}\left [{T_{k-1}}\right]\left ({\mathbb {E}\left [{N_{k}^{(1)}}\right]+ \mathbb {E}\left [{N_{k}^{(2)}}\right]}\right) \nonumber \\
&\quad +\,\,\frac {\mathbb {E}\left [{\left ({N_{k}^{(1)}}\right)^{2}}\right]+\mathbb {E} \left [{\left ({N_{k}^{(2)}}\right)^{2}}\right]+2{\mathbb {E}\left [{N_{k}^{(1)}}\right]\mathbb {E}\left [{N_{k}^{(2)}}\right]}}{2\nu }.
\end{align}

Suppose there are $K$ successful updates and $N_{T}$ decisions during an infinite period $T$, it has
\begin{align} \label{equ:D2}
\lim _{T\rightarrow \infty }\frac {N_{T}}{K}= \nu \mathbb {E}\left [{Y_{k}}\right].
\end{align}

Therefore, the average AuD in the M/G/1/1-D bufferless system is
\begin{align}%��ʽ̫���ˣ�������
&\overline{\Delta}_{\text {M/G/1/1-D}}^{\rm{blocking}} \nonumber \\
&=\lim _{T\rightarrow \infty }\frac {1}{N_{T}} \sum _{k=1}^{K} \Delta _{{\rm D}k} \nonumber
\\ &=\lim _{T\rightarrow \infty }\frac {K}{N_{T}}\left ({\frac {1}{K} \sum _{k=1}^{K}\Delta _{ {\rm D}k}}\right)\nonumber
%\\ &=\frac{\mathbb {E}\left [{\Delta _{{\rm D}k}}\right]}{\nu \mathbb {E}\left [{Y_{k}}\right]}\nonumber
\\ &=\frac {\mathbb {E}\left [{T_{k-1}}\right]\left (\mathbb {E}\left [{N_{k}^{(1)}}\right]+\mathbb {E}\left [{N_{k}^{(2)}}\right] \right)}{\nu\mathbb {E}\left [{Y_k}\right]}\nonumber
\\& \quad +\frac {\mathbb {E}\left [{\left ({N_{k}^{(1)}}\right)^{2}}\right]+\mathbb {E} \left [{\left ({N_{k}^{(2)}}\right)^{2}}\right]+2{\mathbb {E}\left [{N_{k}^{(1)}}\right] \mathbb {E}\left [{N_{k}^{(2)}}\right]}}{2{\nu^2}\mathbb {E}\left [{Y_{k}}\right]}.
\end{align}

\subsection{Proof of Corollary \ref{cor:3}}\label{app:cor3}
Let us denote $\alpha = e^{-\mu / \nu}$ and $\beta = e^{-\lambda / \nu}$. Considering that $Y_{k1}$ is the residual inter-arrival time of the ${(k-1)}^{th}$ and $k^{th}$ successful update, it has an exponential distribution $f_{{Y}_{k1}}(t)=\lambda e^{-\lambda t }$.

Firstly, we consider the case when the service time follows a uniform distribution. We denote the decision numbers in $Y_{k1}$ and $Y_{k2}$ as $N_{{\rm U}k}^{(1)}$ and $N_{{\rm U}k}^{(2)}$, respectively. It has
\begin{align} \label{equ:MU11D1}
\Pr \left \{{N_{{\rm U}k}^{(1)}=j}\right \}=&\Pr \left \{{\frac {j-1}{\nu }+\eta< Y_{k1}< \frac {j}{\nu }+\eta}\right \} \nonumber
\\=&\int _{0}^{\frac {1}{\nu }}f_{\rm \eta}\left ({x}\right)dx\int _{\frac {j-1}{\nu }+x}^{\frac {j}{\nu }+x} f_{{Y}_{k1}}\left ({t}\right)dt \nonumber\\
=&\frac {\nu \left ({1-\beta}\right)^2 \beta^{j-1}}{\lambda},~~j=1, 2,\ldots \nonumber
\\ \mathbb {E}\left [{N_{{\rm U}k}^{(1)}}\right]=&\frac {\nu }{\lambda },\mathbb {E}\left [{\left ({N_{{\rm U}k}^{(1)}}\right)^{2}}\right] =\frac {\nu \left ({1+\beta}\right)}{\lambda \left ({1-\beta}\right)}.
\end{align}

We also denote $\zeta_k = \tau_{k_{N_{k}^{(1)}+1}} - t_{k}$, which is the period from the arrival epoch of the $k^{th}$ successful update to the first decision epoch during $Y_{k2}$. Similarly, $\eta_k$ is uniformly distributed with parameter $\nu$ over $[0, 1/\nu]$ because the decision epochs follow a uniform distribution during $Y_k$. Hence, it has
\begin{align} \label{equ:MU11D2}
\Pr \left \{{N_{{\rm U}k}^{(2)}=j}\right \}=&\Pr \left \{{\frac {j-1}{\nu }+\zeta< Y_{k2}< \frac {j}{\nu }+\zeta}\right \} \nonumber \\
=&\int _{0}^{\frac {1}{\nu }}f_{\zeta}\left ({x}\right)dx\int _{\frac {j-1}{\nu }+x}^{\frac {j}{\nu }+x} f_{Y_{k2}}\left ({t}\right)dt \nonumber \\
=&\frac {1}{2m_{0}},~~j=1, 2,\ldots \nonumber \\
\mathbb {E}\left [{N_{{\rm U}k}^{(2)}}\right]=&\frac {2m_{0}+1}{2}, \nonumber \\
\;\mathbb {E}\left [{\left ({N_{{\rm U}k}^{(2)}}\right)^{2}}\right]=&\frac {\left ({2m_{0}+1}\right) \left ({2m_{0}+2}\right)\left ({4m_{0}+3}\right)}{12m_{0}}.
\end{align}

Recall that the mean service time is $1/\mu$ and $\mathbb {E}[ Y ] = \mathbb {E}[X]+\mathbb {E}[S]$. By inserting (\ref{equ:MU11D1}) and (\ref{equ:MU11D2}) to (\ref{equ:MG11D}), the average AuD of the M/U/1/1-D bufferless system is

\begin{align} \overline{\Delta}_{\text {M/U/1/1-D}}^{\rm{blocking}}=&\frac {2\rho m_0 + 4m_0 +\rho + 1}{2\mu m_0 \left ({1+\rho }\right) } +\frac {\left ({1+\beta}\right)}{2\mu m_{0}\left ({1+\rho }\right)\left ({1-\beta}\right)} \nonumber\\
&+\frac {\rho \left ({8m_0^3+18m_0^2+13m_0+3}\right)}{12\mu m_{0}^3\left ({1+\rho}\right)}.
\end{align}

Secondly, we consider an M/M/1/1-D system employing a length-1 blocking queue with negative exponential service time distribution. We denote the decision numbers in $Y_{k1}$ and $Y_{k2}$ as $N_{{\rm E}k}^{(1)}$ and $N_{{\rm E}k}^{(2)}$, respectively. Note that $N_{{\rm E}k}^{(1)}$ is equal to $N_{{\rm U}k}^{(1)}$ and all it needs is the first and second moments of $N_{{\rm E}k}^{(2)}$, which is given by
\begin{align} \label{equ:MM11D1}
\Pr \left \{{N_{{\rm E}k}^{(2)}=j}\right \}=&\Pr \left \{{\frac {j-1}{\nu }+\zeta< Y_{k2}< \frac {j}{\nu }+\zeta}\right \} \nonumber \\
=&\int _{0}^{\frac {1}{\nu }}f_{\zeta}\left ({x}\right)dx \int _{\frac {j-1}{\nu }+x}^{\frac {j}{\nu }+x}f_{{Y_{k2}}}\left ({t}\right)dt \nonumber \\%\nonumber
=&\frac{\nu \left ({1-\alpha}\right)^2 \alpha^{j-1}}{\mu},~~j=1, 2,\ldots \nonumber \\
\mathbb {E}\left [{N_{{\rm E}k}^{(2)}}\right]=&\frac {\nu }{\mu }, \mathbb {E}\left [{\left ({N_{{\rm E}k}^{(2)}}\right)^{2}}\right]= \frac {\nu \left ({1+\alpha}\right)}{\mu \left ({1-\alpha}\right)}.
\end{align}

By inserting (\ref{equ:MU11D1}) and (\ref{equ:MM11D1}) to (\ref{equ:MG11D}), the average AuD in the M/M/1/1-D bufferless system with a deterministic decision process is
\begin{align}
\overline{\Delta}_{\text {M/M/1/1-D}}^{\rm{blocking}}=&\frac {2+\rho}{\mu \left ({1+\rho }\right)}+ \frac {\rho \left ({1+\alpha}\right)}{2\mu m_{0}\left ({1+\rho }\right)\left ({1-\alpha}\right)} \nonumber \\ 
&+\frac {\left ({1+\beta}\right)}{2\mu m_{0}\left ({1+\rho }\right)\left ({1-\beta}\right)}.
\end{align}

Finally, when the update service time is deterministic and set to $1/\mu$, the decision numbers in $Y_{k1}$ and $Y_{k2}$ are denoted as $N_{{\rm D}k}^{(1)}$ and $N_{{\rm D}k}^{(2)}$, respectively. It is clear that $N_{{\rm D}k}^{(2)} = m_0$ because the decision intervals are also deterministically distributed with parameter $1/\nu$. Therefore, it has
\begin{align} \label{equ:MD11D1}
\mathbb {E}\left [{N_{{\rm D}k}^{(1)}}\right]=&\mathbb {E}\left [{N_{{\rm U}k}^{(1)}}\right]=\frac {\nu }{\lambda }, \mathbb {E}\left [{\left ({N_{{\rm D}k}^{(1)}}\right)^{2}}\right]=\frac {\nu \left ({1+\beta}\right)}{\lambda \left ({1-\beta}\right)}, \nonumber \\
\mathbb {E}\left [{N_{{\rm D}k}^{(2)}}\right]=&m_0, \mathbb {E}\left [{\left ({N_{{\rm D}k}^{(2)}}\right)^{2}}\right]= m_0^2.
\end{align}

By inserting (\ref{equ:MD11D1}) to (\ref{equ:MG11D}), the average AuD in the M/D/1/1-D bufferless system with a deterministic decision process can be expressed as
\begin{align}
\overline{\Delta}_{\text {M/D/1/1-D}}^{\rm{blocking}}=&\frac {4+3\rho}{2\mu \left ({1+\rho }\right)}+\frac {\left ({1+\beta}\right)}{2\mu m_{0}\left ({1+\rho }\right)\left ({1-\beta}\right)}.
\end{align}

%Proof of \emph{Corollary} \ref{cor:3} is then completed.

%Proof of \emph{Theorem} \ref{thm:7} is then completed.

\subsection{Proof of Theorem \ref{thm:6}}\label{app:thm6}
We first compare the average AuDs between the M/U/1/1-D and the M/M/1/1-D bufferless systems with the same deterministic decision process. Their difference is
\begin{align}
f_{1}\left ({m_{0}}\right)=&\overline{\Delta}_{\text {M/U/1/1-D}}^{\rm{blocking}}- \overline{\Delta}_{\text {M/M/1/1-D}}^{\rm{blocking}}\nonumber\\
=&\frac {\rho \left ({8m_0^3+18m_0^2+13m_0+3}\right)}{12\mu m_{0}^3\left ({1+\rho}\right)}+\frac{1}{2\mu m_0} \nonumber\\
&-\frac {\rho \left ({1+\alpha}\right)}{2\mu m_{0}\left ({1+\rho }\right)\left ({1-\alpha}\right)},
\end{align}
where $\alpha$ has been defined in Appendix \ref{app:cor3}. Set $t = -1/m_0$ and the first derivative of $f_1(t)$ with respect to $t$ is given by
\begin{align} \label{equ:MG11D2}
f_{1}^{\prime }\left ({t}\right)=& f_{1}^{\prime }\left ({-\frac {1}{m_{0}}}\right) \nonumber \\
=&\frac {\rho e^{t}\left (t+1{-e^{t}}\right)}{\mu \left ({1+\rho}\right) \left ({1-e^{t}}\right)^{2}}
+\frac{\rho x \left( 26 - 9t \right)}{12 \mu \left( 1+\rho \right)} \nonumber \\ 
&-\frac {3\rho+1}{2\mu \left (1+\rho\right)}.
\end{align}
Since it has $\rho>0$ and $-1<t<0$, one can observe that each term of (\ref{equ:MG11D2}) is negative and $f_{1}^{\prime }({t}) < 0$, which means $f_{1}({m_{0}})$ is monotonically deceasing with $m_0$. Also note that $f_{1}({0})>0$ and $f_{1}({+\infty})<0$, it can therefore be deduced that there exists a positive integer $m^*_0$ such that $\overline{\Delta}_{\text {M/U/1/1-D}}^{\rm{blocking}}<\overline{\Delta}_{\text {M/M/1/1-D}}^{\rm{blocking}}$ when $m_0> m_0^*$ and $\overline{\Delta}_{\text {M/M/1/1-D}}^{\rm{blocking}}<\overline{\Delta}_{\text {M/U/1/1-D}}^{\rm{blocking}}$ when $m_0 \leq m_0^*$.

Then, we make the average AuD comparison between the M/M/1/1-D and the M/D/1/1-D bufferless systems, their difference is
\begin{align}
f_{2}\left ({m_{0}}\right)=&\overline{\Delta}_{\text {M/M/1/1-D}}^{\rm{blocking}}- \overline{\Delta}_{\text {M/D/1/1-D}}^{\rm{blocking}}\nonumber\\
=&\frac {\rho \left ({1+\alpha}\right)}{2\mu m_{0}\left ({1+\rho }\right)\left ({1-\alpha}\right)}-\frac{\rho}{2\mu \left( 1+\rho \right)}.
\end{align}
Since $(1+\alpha)/(m_0 (1-\alpha))$ decreases as $m_0$ increases and $f_{2}(\infty)>0$, one can check that $\overline{\Delta}_{\text {M/M/1/1-D}}^{\rm{blocking}}>\overline{\Delta}_{\text {M/D/1/1-D}}^{\rm{blocking}}$ for all $m_0 > 1$.

Lastly, we compare the average AuD of M/U/1/1-D queuing model to that of M/D/1/1-D queuing model, the difference is
\begin{align}
f_{3}\left ({m_{0}}\right)=&\overline{\Delta}_{\text {M/U/1/1-D}}^{\rm{blocking}}- \overline{\Delta}_{\text {M/D/1/1-D}}^{\rm{blocking}} \nonumber\\
=&\frac {\rho \left ({8m_0^3+18m_0^2+13m_0+3}\right)}{12\mu m_{0}^3\left ({1+\rho}\right)}+\frac{1}{2\mu m_0} \nonumber \\
&-\frac {\rho}{2\mu \left ({1+\rho }\right)}.
\end{align}
It is clear that $f_{3}({m_{0}})$ drops with $m_0$ and $f_{3}(\infty)>0$. Hence, it can be obtained that $\overline{\Delta}_{\text {M/U/1/1-D}}^{\rm{blocking}}>\overline{\Delta}_{\text {M/D/1/1-D}}^{\rm{blocking}}$ when $m_0>1$.

\subsection{Proof of Theorem \ref{thm:7}}\label{app:thm7}
Firstly, let us prove that the average AuDs in \emph{Corollary} \ref{cor:3} decrease with $m_0$. Let us define
\begin{align}
f\left({m_{0}}\right) = \frac{1+\alpha}{m_0\left( 1-\alpha \right)}.
\end{align}
By setting $t = 1/m_0$, the first derivative of $f(t)$ with respect to $t$ is given by
\begin{align}
f^{\prime }\left({t}\right) = \frac{e^{2t}-2te^t-1}{\left(e^t-1\right)^2}.
\end{align}
Let us further denote $g(t)=e^{2t}-2te^t-1$ and $g^{\prime}(t)=2e^{2t}-2te^t-2e^t$. Noticing that $e^t \geq t+1$, one can check that $g^{\prime}(t)\geq 0$ and $g(t)\geq 0$. Therefore, it is observed that $f^{\prime}(t)$ is non-negative, which means $f({m_{0}})$ is monotonically deceasing with $m_0$. Also note that this trend remains unchanged if $\alpha$ is replaced by $\beta$. Hence, it can be verified that the average AuDs in \emph{Corollary} \ref{cor:3} decrease with $m_0$ since each item in (\ref{equ:MU11D}), (\ref{equ:MM11D}) and (\ref{equ:MD11D}) decreases or stays unchanged with $m_0$.

Moreover, when $m_0 \to \infty$ , one can get $(1+\alpha)/(m_0(1-\alpha)) = 2$ and $(1+\beta)/(m_0(1-\beta))= 2/\rho$. By combining these equations, one can check that the average AuD under deterministic decisions reduces to the corresponding average AuD under Poisson decisions as $m_0 \to \infty$.

%Proof of \emph{Theorem} \ref{thm:6} is then completed.

\begin{figure}[!t]
%\centering
\includegraphics[width=1\linewidth]{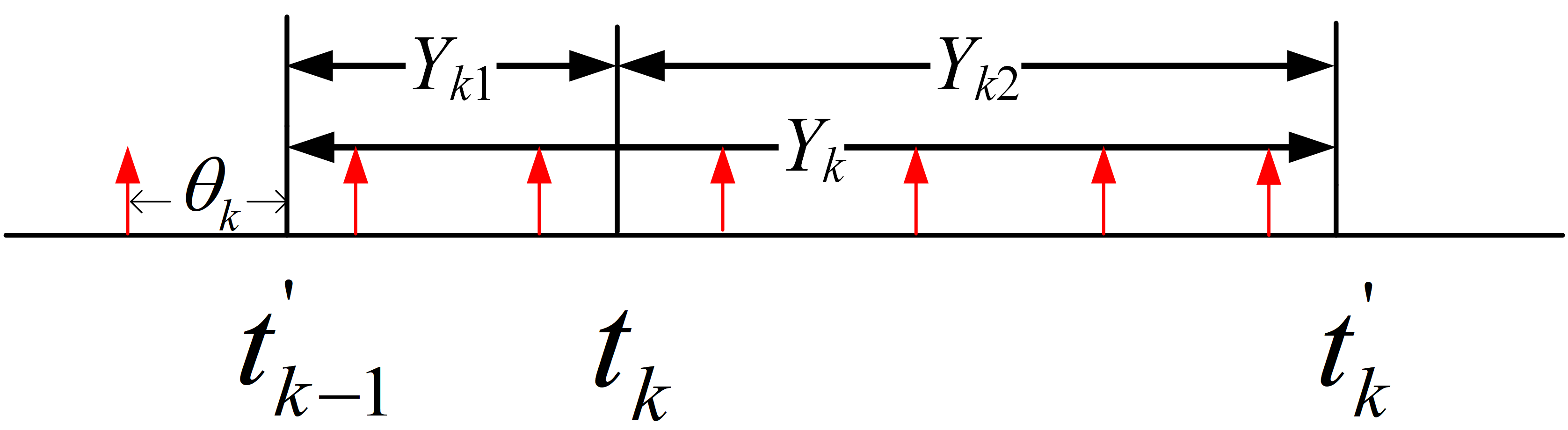}
\caption{Inter-departure time.}
\label{fig:blocking_mis}
\end{figure}

\subsection{Proof of Corollary \ref{cor:4}}\label{app:cor4}
As shown in Fig. \ref{fig:blocking_mis}, we denote $\theta_k = t_{k-1}^{\\'} - \tau_{{k-1}_{N_{k-1}}}$ as the period between the departure epoch of the ${(k-1)}^{th}$ successful update and the latest decision epoch before $t_{k-1}^{\\'}$. In particular, let us make an approximation that $\theta_k$ is uniformly distributed with $1/\nu$ whose PDF is $f_{\theta}\left ({t}\right)={\nu }$ for $t\in \left (0,1/\nu \right)$. Since the missing probability is equal to the probability that decision interval is larger than inter-departure time, by considering the fact that the decision interval is deterministic and equal to $1/\nu$, it has
\begin{align}
& p_{\rm {mis}}^{\text {M/G/1/1-D}} \nonumber \\
&=\Pr \left \{{Y_{k}+\theta< \frac {1}{\nu }}\right \} \nonumber \\
&=\Pr \left \{{Y_{k1}+Y_{k2}+\theta< \frac {1}{\nu }}\right \} \nonumber \\
&=\int _{0}^{\frac {1}{\nu }}f_{S}\left ({x}\right)dx\int _{0}^{\frac {1}{\nu }-x}f_{{Y}_{k1}} \left ({y}\right)dy\int _{0}^{\frac {1}{\nu }-x-y}f_{\theta}\left ({z}\right)dz \nonumber \\
&=\int _{0}^{\frac {1}{\nu }}f_{S}\left ({x}\right)\left ({\frac {\nu \beta e^{\lambda x}}{\lambda }-\nu x+\frac {\lambda-\nu }{\lambda }}\right)dx, \\
&p_{\rm {mis}}^{\text {M/U/1/1-D}}=\frac{1}{4m_0}+\frac {{m_{0} \left ({1-\beta}\right)-\rho }}{2\rho ^{2}}, \\
&p_{\rm {mis}}^{\text {M/M/1/1-D}}=\frac{m_0 \left ( \alpha- \beta \right )}{\rho \left( \rho-1 \right)}
+\frac{\left ( \rho-m_0- \rho m_0 \right ) \left (1-\alpha \right )}{\rho}+{\alpha},
\end{align}
in which $\alpha$ and $\beta$ have been defined in Appendix \ref{app:cor3}.

In the deterministic service time case, however, it is clear that each inter-departure time consists at least $m_0$ decisions, which means all the successful update will be used at least $m_0$ times. Since $m_0 \geq 1$, it has
\begin{align}
p_{\rm {mis}}^{\text {M/D/1/1-D}} = 0.
\end{align}

%Proof of \emph{Corollary} \ref{cor:4} is then completed.

\baselineskip=18pt
\bibliographystyle{IEEEtran}
\bibliography{IEEEabrv,AuD}

\ifCLASSOPTIONcaptionsoff
  \newpage
\fi
\end{document}